\newcommand\pN{\mathcal{N}}
\newtheorem{theorem}{Theorem}[section]
\newtheorem{lemma}[theorem]{Lemma}
\long\def\comment#1{}
\newcommand{\beq}{\begin{equation*}}
\newcommand{\eeq}{\end{equation*}}
\newfont{\bbb}{msbm10 scaled 700}
\newfont{\bb}{msbm10 scaled 1100}
\newcommand{\nv}{{\bf n}}
\newcommand{\vv}{{\bf v}}
\newcommand{\xv}{{\bf x}}
\newcommand{\yv}{{\bf y}}
\newcommand{\thetav}{\hbox{\boldmath$\theta$}}
\newcommand{\Phim}{\hbox{\boldmath$\Phi$}}
\newcommand{\Psim}{\hbox{\boldmath$\Psi$}}
\newcommand{\Thetam}{\hbox{\boldmath$\Theta$}}
\renewcommand{\arg}{{\hbox{arg}}}
\begin{document}
\title{Layer-Adaptive Communication and Collaborative Transformed-Domain Representations for Performance Optimization in WSNs}

{\author{\IEEEauthorblockN{Muzammil Behzad$^{1}$, Manal Abdullah$^{2}$, Muhammad Talal Hassan$^{1}$, Yao Ge$^3$, Mahmood Ashraf Khan$^{1}$}
		\IEEEauthorblockA{$^1$COMSATS Institute of Information Technology (CIIT), Islamabad 44000, Pakistan\\
			$^2$King Abdulaziz University (KAU), Jeddah 21589, Kingdom of Saudi Arabia\\
			$^3$The Chinese University of Hong Kong (CUHK), Shatin 999077, Hong Kong\\
			Email: \{muzammil, talal\}@vcomsats.edu.pk, maaabdullah@kau.edu.sa, yge@ee.cuhk.edu.hk, mahmoodashraf@comsats.edu.pk
			}}}

\maketitle

\begin{abstract}
In this paper, we combat the problem of performance optimization in wireless sensor networks. Specifically, a novel framework is proposed to handle two major research issues. Firstly, we optimize the utilization of resources available to various nodes at hand. This is achieved via proposed optimal network clustering enriched with layer-adaptive 3-tier communication mechanism to diminish energy holes. We also introduce a mathematical coverage model that helps us minimize the number of coverage holes. Secondly, we present a novel approach to recover the corrupted version of the data received over noisy wireless channels. A robust sparse-domain based recovery method equipped with specially developed averaging filter is used to take care of the unwanted noisy components added to the data samples. Our proposed framework provides a handy routing protocol that enjoys improved computation complexity and elongated network lifetime as demonstrated with the help of extensive simulation results.
\end{abstract}

\begin{IEEEkeywords}
	coverage holes, denoising, energy efficiency, energy holes, sparse representations, wireless sensor networks
\end{IEEEkeywords}

\IEEEpeerreviewmaketitle

\section{Introduction}
Latest accelerations toward technological advancements in wireless communication based Internet-of-Things (IoT) offer great deal of support to the research community dedicated to develop extremely convenient and elegant systems in terms of design, computational cost, and practical implementation. Such efforts have allowed the electronics industry to assemble wireless devices with tiny structure, economical value, and having the ability to efficiently use the available power resources. Toward this end, wireless sensor networks are gaining importance in the continual progress of information and communication technologies \cite{7110295}. However, since majority of the wireless devices are constrained by the available resources, the power consumption and communication overhead are therefore significant research areas for design and development of wireless networks toward efficient management in IoT.

Wireless sensor networks (WSNs) are made up of small, portable and energy-restricted sensor nodes deployed in an observation venue. These nodes carry the baggage of transmitting vital information using wireless radio links. Such information can have the form of multi-dimensional signals and are of critical importance in many world-wide applications. The development of such networks demand extensive planning strategy along with superior tactical approaches for~its working capabilities. This effective development motivates~the existence of many real-time application scenarios such as security monitoring \cite{6604028}, environmental control \cite{7887698}, under-water quality measurement \cite{7808887}, battlefield surveillance \cite{7528266}, target detection \cite{7997348}, remote environmental monitoring \cite{Yick20082292}, medical and healthcare \cite{6883890}, an amazingly increasing interest~toward application like refineries, petrochemicals, underwater development facilities, and oil and gas platforms \cite{5474813}, and many~more.

The first step in establishing a WSN is the initialization and distribution of sensor nodes around the observation field. Many researchers have advocated normal distribution of the statically deployed nodes as the optimal distribution (e.g., see \cite{Bharucha:2008:DPL:1461796.1461800, 7368292}). The deployment of nodes in the network field area is then followed by transmission of the required data. Since these nodes are left unattended, with limited resources at hand, efficient utilization of the available resources become a key operation factor to form a vigorous and standalone network.

To tackle the aforementioned problem, many protocols have exploited and voted for clustering of the network area, a pioneer contribution by W. B. Heinzelman \cite{heinzelman2000application}, as one of the appreciated methods toward efficient resource utilization. Fundamentally, clustering aims to effectively divide the network field into multiple smaller observation versions thereby making resource management a comparatively convenient task \cite{7016049, 7920983, behzad2017distributed}. However, this requires free and fair election of cluster heads (CHs) in each cluster. These CHs are responsible for data fusion, i.e., they receive data from their respective clusters' normal nodes, and transmit them to the base station (BS), also known as sink, for necessary actions.

Traditionally, researchers propose routing protocols for WSNs by focusing mainly on performance improvements via effective selection criterion during CH election, the choice of single-hop or multi-hop communication between nodes, and whether the clustering scheme should be static or dynamic. Even though an optimal combination of the above factors yield interesting results, which is a blessing in disguise, however, impressive results can be achieved by looking into more complex parts of the problem instead of merely scratching the tip of the iceberg. Furthermore, another much concerned and barely discussed side of the problem, while designing these routing protocols, is tackling the degradation due to environment. A common form of the inevitable degradation affecting the data sent over radio links is that of additive white Gaussian noise (AWGN). Several protocols and algorithms designed in this domain discard the aforesaid important issue and just focus on minimizing the energy consumption by assuming that the data received by nodes have not experienced any noise addition due to the environment.

\subsection{Recoveries via Sparse Representations}
 Contrary to the traditional Nyquist-Shannon sampling theorem, where a certain minimum number of samples is required in order to perfectly recover an arbitrary band-limited signal, i.e., one must sample at least double the signal bandwidth, compressive sensing (CS) has emerged out as a new framework for data acquisition and sensor design in an extremely competent way. The basic idea is that if the data signal is sparse in a known basis, a perfect recovery of the signal can be achieved leading to a significant reduction in the number of measurements that need to be stored \cite{qaisar2013compressive}.
 
 In this regard, a wide investigation has been performed which showed promising number of real-time CS applications for so many areas, such as imaging, radar, speech recognition, data acquisition, wireless communication and sensor networks \cite{qaisar2013compressive}. Its applications in communication are mainly on channel estimation, interference removal, Massive MIMO, and enable leveraging tools to design efficient estimation algorithms for wireless channels \cite{5454399,alkhateeb2014channel}. Since CS is a significantly promising approach for simultaneous sensing, data gathering and compression, this allows a large reduction in resources usability, computational costs, and thus prolongs the network lifetime for both small and large-scale WSNs.

According to CS, we consider the following linear measurement model to recover an unknown vector $\vv$ from an under-determined system:
\begin{equation}
\xv = \Phim \vv = \Phim \Psim \thetav = \Thetam \thetav,
\end{equation}
where $\xv \in {\mathbb{C}^M}$, $\vv = \Psim \thetav \in {\mathbb{C}^N}$ are observed signal and unknown vector, $\thetav\in {\mathbb{C}^N}$ is unknown sparse signal which, for example, a node will collect, representing projection coefficients of $\vv$ on $\Psim$, $\Thetam = \Phim \Psim$ is an $M \times N$ reconstruction matrix ($M < N$). The measurement matrix $\Thetam$, which satisfies the Restricted Isometric Property (RIP) \cite{donoho2006compressed}, is designed such that the dominant information of $\thetav$ can be captured into $\xv$.

Reconstruction algorithms in CS exploit the fact that many signals are genetically sparse, therefore they proceed to minimize ${\ell_0},{\ell_1}$ or ${\ell_2}$ norm over the solution space. Among them, $\ell_1$-norm is the most accepted approach due to its tendency to successfully recover the sparse estimate $\hat {\thetav}$ of $\thetav$ as follows:
\begin{align}\label{opt.CS}
\hat {\thetav} = \mathop {\arg \min }\limits_{\thetav} {\left\| {\thetav} \right\|_{{\ell_1}}}, \quad \text{subject to} \quad {\xv} \approx \Phim \Psim \thetav.
\end{align}
\begin{figure}[t]
	\centering
	\includegraphics[width=0.6\linewidth]{./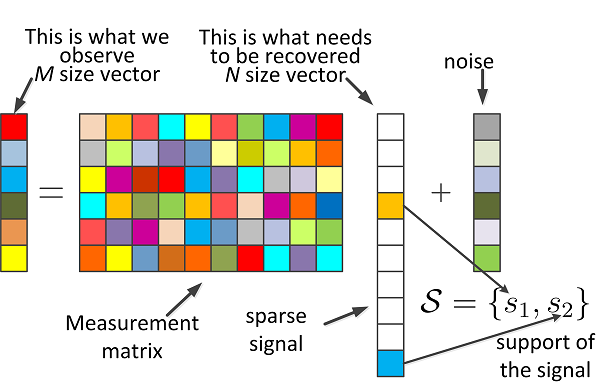}
	\caption{Sparse Model}
	\label{fig:fig1}
\end{figure}

In this regard, a number of methods are available to solve the optimization problem in (\ref{opt.CS}), such as Matching Pursuit (MP) \cite{tropp2004greed}, Basis Pursuit (BP) \cite{candes2006robust} and Linear Program (LP) \cite{boyd2004convex}. Despite this, however, the inevitable presence of noise in wireless channels is always a challenging task to combat. Consequently, the system is modeled as
\begin{equation}
	\yv = \xv + \nv = \Thetam \thetav + \nv,
\end{equation}
where $\yv \in {\mathbb{C}^M}$ is the noisy version of the clean signal $\xv \in {\mathbb{C}^M}$ which is corrupted by the noise vector $\nv \in {\mathbb{C}^M}$ with i.i.d. zero mean Gaussian entries having variance $\sigma_{\nv}^2$ i.e., \boldmath{$\nv$}(.) $\sim \pN(0, \sigma_{\nv}^2\textbf{1})$. A depiction of this model is shown in Fig.  \ref{fig:fig1}.

\subsection{Contribution}
In this paper, we introduce a novel framework to tackle two major concerns in WSNs; 1) optimization of the system performance via efficient energy utilization, and 2) combating the unavoidable presence of Gaussian noise, in the application data, added as a result of multiple communications among the deployed nodes via wireless channel. In our framework, we propose a fast and low-cost sparse representations based collaborative system enriched with layer-adaptive 3-tier communication mechanism. This is supported by an effective CHs election method and mathematically convenient coverage model guaranteeing minimization of energy and coverage holes. A computationally desired implementation of our proposed framework is an added benefit that makes it a preferable choice for real-world applications.

For performance optimization of the network, we deploy a 3-tier communication architecture, equipped with adaptively-elected CHs, among all the reactively sensing nodes where we make sure to restrict the nodes sending data signals over longer distances. To tackle the problem of AWGN, we transmit the data in its spatial domain form and later compute its sparse estimates at the receiver end for separating out the unwanted noisy components. For a much better denoising, we let the nodes situated at a single hop to mutually negotiate with each other for better collaboration. The denoising of the data is further refined by a specially designed averaging filter. Our proposed protocol lends itself the following salient features:
\begin{itemize}
	\item The implementation of a mathematically efficient coverage model along with an adaptive CHs election method help avoiding coverage holes to a greater extent.
	\item Our proposed layer-adaptive 3-tier communication system greatly reduce energy holes.
	\item To compute denoised data signals, we compute support-independent sparse estimates which relieves us from finding distribution of the sparse representations first, hence, giving it a support-agnostic nature.
	\item Prior collaboration enjoyed by the nodes for communication yields an effectively significant energy minimization.
	\item The use of a fast sparse recovery technique allows a desired computational complexity of our algorithm.
\end{itemize}

Rest of the paper is structured as follows: Section \ref{Related_Work} presents an overview of the related work done in the literature. We describe our proposed framework in Section \ref{Proposed_Framework}, while the results from various simulations are presented and discussed in Section \ref{Results_Discussions}. Finally, Section \ref{Conclusion} concludes the paper.

\section{Literature Review}
\label{Related_Work}
\subsection{Related Work}
Presently, the most trending researchers of WSNs are fundamentally concentrating on the technologically-enriched tools for performance optimization of network structure, the routing efficiency in communication networks, optimization of network structure, as a result of which the lifetime of these WSNs is possible to increase. This possibility provides a roadway for scientists working in this research domain to propose low-cost, energy-efficient and optimized algorithms~\cite{8109448}. 

This include a trend-setter work by W. B. Heinzelman, et al., proposing a multi-hop energy efficient communication protocol for WSNs, namely LEACH \cite{926982}. The objective was to reduce energy dissipation by introducing randomly elected CHs resulting in, however, unbalanced CHs distribution. Nevertheless, the partition of network area into different regions via clustering yielded a significant increase in system lifetime. As a principal competitor, a new reactive protocol, named as TEEN, was proposed by authors of \cite{925197} for event-driven applications. This protocol, even though constrained to temperature based scenarios only, proposed threshold aware transmissions thereby outperforming LEACH in terms of network lifespan.

In comparison with the aforementioned homogeneous WSNs, the authors of \cite{smaragdakis2004sep} and \cite{QING20062230} proposed SEP and DEEC, respectively, introducing heterogeneous versions of the WSNs by allowing a specific set of nodes, defined as advanced nodes, to carry higher initial energy than other normal nodes. SEP used energy based weighted election to appoint CHs in a two-level heterogeneous network ultimately improving network stability. As a stronger contestant to SEP, DEEC deployed multi-level heterogeneity and improvised CHs election measure to attain extended lifespan of the network than SEP.

A. Khadivi, et al., proposed a fault tolerant power aware protocol with static clustering (FTPASC) for WSNs in \cite{1696382}. The network was partitioned into static clusters and energy load was distributed evenly over high-power nodes, resulting in minimization of power consumption, and increased network lifetime. Another static clustering based sparsity-aware energy efficient clustering (SEEC) protocol is proposed in \cite{SWQ28921546.OW12N}. This protocol used sparsity and density search algorithms to classify sparse and dense regions. A mobile sink is then exploited, specifically in sparse areas, to enhance network lifetime.

As opposed to static clustering, authors of \cite{5136647} presented centralized dynamic clustering (CDC) environment for WSNs. In this protocol, the clusters and number of nodes associated with each cluster remains fixed, and a new CH is chosen in each round of communication between clusters and BS. CDC showed better results than LEACH in terms of communication overhead and latency. In a similar fashion, G.~S.~Tomar, et al., proposed an adaptive dynamic clustering protocol for WSNs in \cite{4809826}, which creates a dynamic system that can change topology architecture as per traffic patterns. Mutual negotiation scheme is used between nodes of different energy levels to form energy efficient clusters. Periodic selection of CHs is done based on different characteristics of nodes. Another work proposed to use the cooperative and dynamic clustering to achieve energy efficiency \cite{7822956}. This framework ensured even distribution of energy, and optimization of number of nodes used for event reporting thereby showing promising results in network lifetime.

D. Jia, et al., tackled the problem of unreasonable CHs selection in clustering algorithms \cite{7365420}. The authors considered dynamic CH selection methods as the best remedy to avoid overlapping coverage regions. Their experimental results showed increased network lifetime than LEACH and DEEC. Another energy efficient cluster based routing protocol, termed as density controlled divide-and-rule (DDR), is proposed in \cite{S2RT4387429.OW12N}. The authors tried to take care of the coverage and energy holes problem in clustering scenarios. They presented density controlled uniform distribution of nodes and optimum selection of CHs in each round to solve this issue. Similarly, a cluster based energy efficient routing protocol (CBER) is proposed in \cite{6531736}. This protocol elects the CHs on the basis of optimal CH distance and nodes' residual energy. CBER reported to outperform LEACH in terms of energy consumption of the network and its lifetime.

\subsection{Motivation}
Currently, most of the traditional protocols for WSNs burden the nodes with execution overhead yielding many unaddressed coverage and energy holes. This is because such protocols go for inadequate network configuration, uneven distribution of nodes, and using CHs criterion which cause undesirable results. Some other research works put efforts to make the network mobile, e.g., mobile sink or mobile nodes. However, the upshot of such a step makes farthest nodes in a network to wait longer for transmitting their data which could be critical based on the application scenarios.

Another unconsidered and important dimension of the problem is the lack of attention given to the fact that the communication among nodes via wireless channel is always corrupted by the presence of noise. Consequently, all the aforementioned factors motivated us to come up with our novel algorithm where we tackle the problems of both resource optimization for energy efficiency as well as denoising the degraded data.

\section{Proposed Framework Design}
\label{Proposed_Framework}
In this section, we provide the readers with detailed understanding of our proposed routing protocol. Here, we broadly discuss the widely accepted radio model for communication among nodes, which is then followed by a comprehensive explanation of our adopted network configuration and its operation details for energy efficiency and denoising of the~data.

\subsection{Wireless Communication Model}
For transmission and reception of required data among sensor nodes via wireless medium, we assume the simple and most commonly used first order radio communication model as given in Fig. \ref{fig:fig2}. In this figure, we present the energy consumed by a node while transmitting and receiving data. We show that a packet of data traveling over radio waves has to combat against degrading factors such as noise, multi-path fading, etc. Thus, we also take into account the $d^2$ losses that almost all chunks of data has to face. This is mathematically explained in terms of the following expressions:
\begin{equation}
E_{Tx}(k,d) =
\begin{cases}
k\times (E_{elec}+ \epsilon_{fs}\times d^2), & d < d_o\\
k\times (E_{elec}+ \epsilon_{mp}\times d^4), & d \geq d_o
\end{cases}
\end{equation}
\begin{equation}
\hspace{-5cm} E_{Rx}(k) = E_{elec}\times k,
\end{equation}
where $d_o$ is a reference distance, $k$ is the number of bits in packet, $d$ is the transmission distance which varies every time for each node, $E_{elec}$ is the energy used for data processing, $\epsilon_{fs}$ and $\epsilon_{mp}$ are channel dependent loss factors\footnote{This is worth noting that over larger distances, such loss factors demand a higher amount of energy yielding sudden death of the network. This is often missed by traditional protocols assuming lossless channel. Therefore, avoiding these power-hungry transmissions significantly optimize resources.}, $E_{Tx}$ is the energy used by a node for transmission, and $E_{Rx}$ is the energy used by a node for data reception. As shown, the $d^r$ losses may change from $\left. d^r \right \vert_{r=2}$ to $\left. d^r \right \vert_{r=4}$ forcing a higher value of $E_{Tx}$. A similar increase is then observed in the $E_{Rx}$ values to process a highly corrupted data when assuming a noisy environment, as in our case. The generally used energy dissipation values  for a radio channel are presented in Table~\ref{tab1}.
\begin{figure}[t]
	\centering
	\includegraphics[width=1\linewidth]{./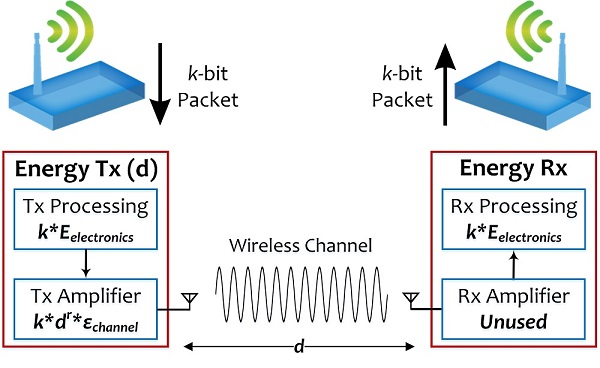}
	\caption{Radio Model}
	\label{fig:fig2}
\end{figure}
\begin{table}[b]
	\caption{\textbf{Energy Dissipation Measurements}}
	\centering
	\renewcommand{\arraystretch}{1}
	\begin{tabular}{|l|l|}
		\hline \hline
		\textbf{Dissipation Source} & \textbf{Amount Absorbed} \\ [0.5ex]
		\hline \hline
		$E_{elec}$ of Rx and Tx & 50 nJ/bit \\
		\hline
		Aggregation Energy  & 5 nJ/bit/signal\\
		\hline
		Tx Amplifier $\epsilon_{fs} \text{for} \left. d^r \right \vert_{r=2}$ & 10 pJ/bit/4m$^2$ \\
		\hline
		Tx Amplifier $\epsilon_{mp} \text{for} \left. d^r \right \vert_{r=4}$ & 0.0013 pJ/bit/m$^4$ \\
		\hline\hline
	\end{tabular}
	\label{tab1}
\end{table}

\subsection{Network Configuration}
\begin{figure*}[t]
	\centering
	\includegraphics[width=1\linewidth]{./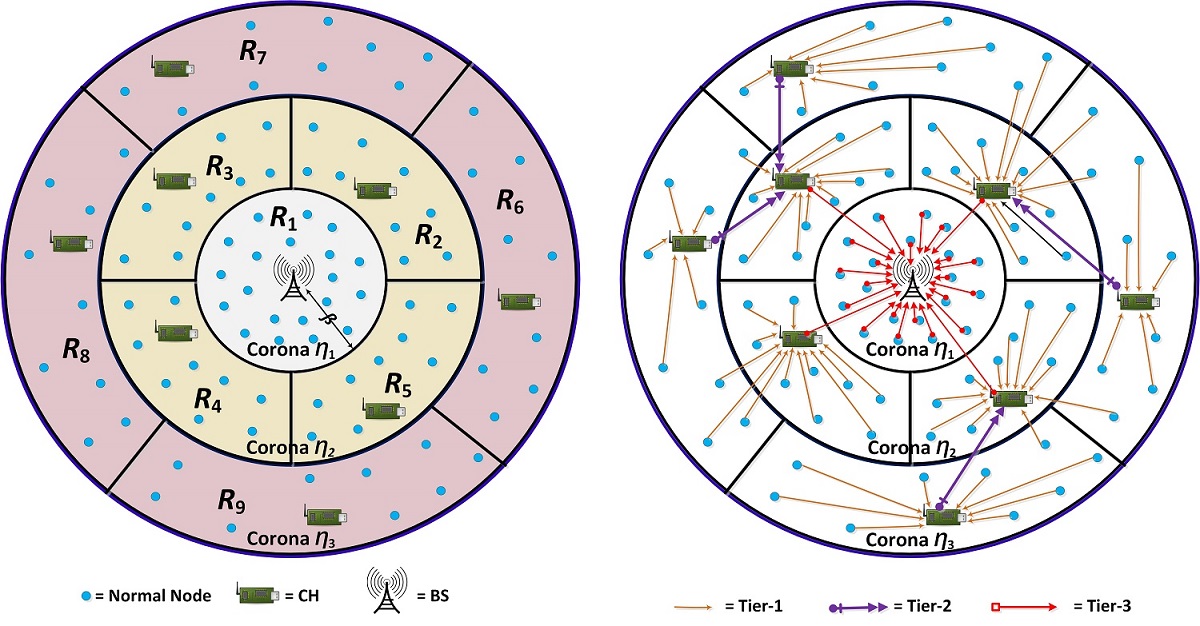}
	\caption{(a) Network Configuration Model, (b) 3-Tier Communication Architecture}
	\label{fig:fig3}
\end{figure*}
For the configuration model, we use a network consisting of $L$ number of nodes deployed randomly. Unlike the traditional models, we adopt a spherically-oriented field, and propose to use an optimized version of area division via adaptive clustering. For a much better understanding, see the network model shown in Fig. \ref{fig:fig3}(a). Here, for the sake of simplicity and understanding, we use the field area $A = \pi150^2\text{ m}^2$, i.e., the diameter $D = 300$, with a total of $L = 100$ deployed nodes. To avoid formation of energy holes, and thus the death of network, we place the BS in the center of the network at coordinates $<i,j> = (0,0)$. This is followed by the clustering of network field into various coronas which are then further classified into different sensing-based reporting regions. The prior computation of number of coronas, represented by $\eta$, is a function of field area $A$, which itself is depending on $D$, and the number of nodes $L$. As a sound approximate, we propose $\eta = D/L$. Hence, in our case we use $\eta$ = 3 coronas, denoted accordingly by $\eta_1, \eta_2 \text{ and } \eta_3$.

Once the $\eta$ number of coronas are formed, the next step is to divide each corona into various sensing regions as shown in the figure. However, for a much better network performance, the distribution is such that each sensing region in the upper level corona $\eta_\alpha$ surrounds two sensing regions in lower level corona $\eta_{\alpha-1}$. This is shown in Fig. \ref{fig:fig3}(a), where for example region $R_7 \text{ in } \eta_3$ covers both $R_2 \text{ and } R_3 \text{ in } \eta_2$, hence, avoiding coverage holes by satisfying the following expression for a general network configuration\footnote{Here, we presented calculations for $A = \pi150^2\text{ m}^2 \text{ and } L = 100$ merely for the ease of understanding. However, for any other small or large scale network configuration, the computations can be done in a similar fashion using the proposed expressions.}:
\begin{equation}
\label{eq:area}
	A = \pi(D/2)^2 = \sum_{\alpha=1}^{\eta} A_{\eta_{\alpha}} = \sum_{\alpha} A_{R_{\alpha}}, \hspace{0.365cm}\alpha \in \mathbb{Z}^+,
\end{equation}
where $A_{\eta_{\alpha}}$ and $A_{R_{\alpha}}$ represented area of each corona and sensing region, respectively. This is worth noting that we do not divide the corona surrounding BS further, $\eta_1$ in our case, to avoid unneeded and poor use of available resources. Thus, we can safely write:
\begin{equation}
	A_{\eta_{1}} = A_{R_{1}} = \pi\beta^2, \text{ where }\beta \in \mathbb{R}^+.
\end{equation}

\subsection{Nodes Deployment and Layer-Controlled CHs Nomination}
As soon as the network is clustered out into various coronas and sensing regions, the next step is to distribute the nodes randomly over these regions. To optimize resources, a sensible decision is to deploy an equal percentage of nodes over different regions to ensure minimization of coverage holes, and elongation of network lifetime. Therefore, in this scenario, we propose to deploy 20$\%$ of the nodes in region $R_1$ and the rest 80$\%$ of the nodes to be distributed evenly over $R_{2,3,...,8,9}$ regions as shown in Fig. \ref{fig:fig3}(a). This nodes' deployment always depend upon the network field area and number of nodes. Hence, for any other network configuration, an adjusted percentage can be calculated to optimize communication among nodes, and to avoid formation of energy and coverage holes.

Following the deployment of nodes and prior network initialization, the election of CHs is carried out in all $R_{2,3,...,8,9}$ regions. Since the use of CHs in clustering techniques plays an important role to improve network lifespan, effective criterion for CHs election is equally necessary for further improving performance of the network. The most commonly used measures for electing CHs are residual energy and distance from BS. We propose a blend of both to increase the life of each node. Furthermore, we introduce a layering-based election of CHs. This means that the election will take place in lower level coronas $\eta_{\alpha}$ first, and will then move to high level coronas $\eta_{\alpha+1}$ for higher level CHs. The reason to adopt this is the effectiveness noted in CHs election. Thus, in each round, all the nodes are assessed based on their residual energies and top 5$\%$ of the nodes having highest residual energies in their respective regions are shortlisted. These shortlisted nodes then contest against each other where the node with smallest distance to the CHs of both associated regions in lower level corona is elected as CH. The nodes in $\eta_2$ are evaluated in a similar fashion based on the residual energy but having minimum distance with the center of their respective region.

\subsection{Layer-Adaptive 3-Tier Communication Mechanism}
For transfer of data among various nodes, we propose a layer-adaptive 3-tier architecture. Our communication mechanism is enriched with distance-optimized transmissions to avoid wastage of energy. The nodes use a multi-hop scheme instead of directly transmitting the data of interest to BS. In tier-1 phase, all the normal nodes gather data and send it to the nearest CH. This CH may not necessarily be the same region CH. Here, we allow nodes in $\eta_\alpha$ to transmit the data to CHs of even $\eta_{\alpha-1}$. This is the reason why we distributed the sensing regions in such a way that each region in upper level corona is bordered with two regions in lower level corona. However, the nodes of a region in $\eta_\alpha$ cannot transmit to another region on the same corona, i.e., it must either send data to its own CH in $\eta_\alpha$, or any other nearest CH in the two bordered regions on lower level corona $\eta_{\alpha-1}$ as shown in Fig. \ref{fig:fig3}(b).

In the next tier-2 phase of communication, the CHs of $\eta_\alpha$ aggregate their data and then send it to the CHs of $\eta_{\alpha-1}$. Note that even though the CH of $R_3$ is receiving data from CHs of both $R_7 \text{ and } R_8$, this is another blessing in disguise. This is because, as shown in the figure, the CHs of both $R_7 \text{ and } R_8$ have not received data from all the nodes in its region, since some nodes find another nearest CH, so these CHs are aggregating and then forwarding a comparatively smaller amount of load thereby not over burdening themselves. Also, the CHs change in each round based on the election criterion, so it ultimately saves energy. Finally in tier-3 phase, all the CHs in lower level coronas send their data to the BS, hence, completing the data transmission process.

\subsection{Coverage Model}
For reduction in coverage holes, we express the coverage scenario of nodes by a mathematical model. All the deployed sensor nodes are represented in set notation as $\kappa = {\mu_1, \mu_2, \mu_3, ..., \mu_L}$. The coverage model of one alive node $\mu_\alpha$ belonging to the set $\kappa$ can be expressed as a sphere centered at $<i_\alpha,j_\alpha>$ with radius $h_\alpha$. We let a random variable $\aleph_\alpha$ define an event when a data pixel $<a,b>$ is within the coverage range of any node $\mu_\alpha$. As a result, the equivalent of likelihood of the event $\aleph_\alpha$ to happen, as denoted by $P\{\aleph_\alpha\}$, is represented as $P_{cov}\{a,b,\mu_\alpha\}$. A decomposed version of the above is given as follows:
\begin{multline}
	P\{\aleph_\alpha\} = P_{cov}\{a,b,\mu_\alpha\}=\\
	\begin{cases}
	1, & (a-i_\alpha)^2 + (b-j_\alpha)^2 \le h_\alpha^2\\
	0, & \text{otherwise}
	\end{cases}
\end{multline}
where the equation translates that a data pixel $<a,b>$ is surrounded by the coverage range of any random node $\mu_\alpha$ if the distance between them is smaller than the threshold radius $h_\alpha$. However, since the event $\aleph_\alpha$ is stochastically independent from others, this means $\left. h_\alpha \text{ and } h_\gamma \text{ are not related} \implies \alpha, \gamma \in [1,L] \text{ and }\alpha \ne\gamma\right.$. This gives us the following conclusive equations:
\begin{equation}
	P\{\overline{\aleph_\alpha}\} = 1-P\{\aleph_\alpha\} = 1-P_{cov}\{a,b,\mu_\alpha\},
\end{equation}
\begin{multline}
P\{\aleph_\alpha \cup \aleph_\gamma \} = \\ 1-P\{\overline{\aleph_\alpha}\cap\overline{\aleph_\gamma}\}= 1-P\{\overline{\aleph_\alpha}\}.P\{\overline{\aleph_\gamma}\},
\end{multline}
where $P\{\overline{\aleph_\alpha}\}$ denotes the statistical complement of $P\{\aleph_\alpha\}$ which means that $\mu_\alpha$ failed to assist data pixel $<a,b>$. Importantly, this data pixel is given coverage if any of the nodes in the set is covering it otherwise a coverage hole would form. Hence, the following expressions denote the probability such that data pixels would be within the coverage range of at least one of the nodes in the set to minimize coverage holes:
\begin{multline}
P_{cov}\{a,b,\kappa\} = P\{\bigcup\limits_{\alpha=1}^{L}\aleph_\alpha\} = 1 - P\{\bigcap\limits_{\alpha=1}^{L}\overline{\aleph_\alpha}\},\\
						=  1 - \prod_{\alpha=1}^{L}(1-P_{cov}\{a,b,\mu_\alpha\}).
\end{multline}
For further facilitation, we present the coverage rate as fraction of area under coverage, denoted by $Q$, and the overall area of the observation field as follows:
\begin{equation}
	P_{cov}\{\kappa\}=\sum_{\alpha=1}^{L}\sum_{\gamma=1}^{L}\frac{P_{cov}\{a,b,Q\}}{A}
\end{equation}

\subsection{Data Denoising}
After taking care of the energy efficiency, a major problem is to retrieve the original data back. This is because the received data is generally degraded by AWGN so it is of no use unless denoised. For this purpose, we propose denoising of the data samples via Bayesian analysis based sparse recovery techniques. To do so, we take into account the data correlation of various adjacent nodes and use this as an important piece of information for collaboration among various nodes. 

We use three stages for CS based sparse recovery technique to denoise the data. In doing so, received data is converted to sparse domain first (e.g., wavelet transform for images data). This is followed by computing similar and correlated data by adjacent nodes, giving them weights based on the similarity extent. Using equivalent sparse representations of data samples, probability of active taps is computed giving us the location of undesired corrupted support locations. With the help of correlation information, an averaging based collaborative step is performed to remove the unwanted noisy components as shown via flowchart in Fig \ref{fig:fig4}. Finally, we apply a specially developed averaging filter to further smooth out the data. This filter fundamentally works on finding similar data samples and then averaging those samples to provide a clean estimate of the data. Using a CS based pre-determined dictionary, a reverse transform is applied to give back the denoised data in spatial-domain representation as $\hat{\xv} = \Thetam \hat{\thetav}$. 

This should be noted that the Gaussianity property of the noisy data received and then aggregated at the receiver (e.g., CHs or BS) should remain intact. This is because, even though our proposed Bayesian analysis based denoising algorithm is agnostic to support distribution of the sparse coefficients, it does need the data samples to be corrupted by Gaussian noise collectively. A concise version of this is provided in the following Lemma \ref{lemma:lemma1} to support the accuracy of our denoising algorithm.
\begin{lemma}
	The aggregated data samples received at either CHs or BS keep the Gaussianity property intact, hence we can denoise the cumulative version of the AWGN corrupted data.
	\label{lemma:lemma1}
\end{lemma}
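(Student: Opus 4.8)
The plan is to reduce the claim to the fundamental stability property of the Gaussian family: any finite linear combination of independent Gaussian random vectors is again Gaussian. Since the aggregation carried out at a CH or at the BS amounts to forming a (possibly weighted) sum of the individual noisy signals it receives, the aggregated noise is a linear functional of the per-link noise vectors, and the form ``clean signal $+$ AWGN'' should therefore be preserved through every stage of the 3-tier mechanism. What matters for the subsequent denoising is not the precise variance but the fact that the corruption remains jointly Gaussian, since the Bayesian recovery described above is agnostic to the support distribution yet does require Gaussian noise.

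First I would fix notation for a single aggregation step. Suppose a cluster head receives signals from the nodes indexed by a set $\Sa$, where the signal arriving from node $\alpha$ is modeled, exactly as in the channel model, by $\yv_\alpha = \xv_\alpha + \nv_\alpha$ with $\nv_\alpha \sim \pN(0,\sigma_{\nv}^2\textbf{1})$ and the $\nv_\alpha$ mutually independent across $\alpha$. Writing the aggregate as $\yv_{\mathrm{agg}} = \sum_{\alpha \in \Sa} w_\alpha \yv_\alpha$ for deterministic nonnegative similarity weights $w_\alpha$, I would split it into a clean part $\xv_{\mathrm{agg}} = \sum_{\alpha} w_\alpha \xv_\alpha$ and a noise part $\nv_{\mathrm{agg}} = \sum_{\alpha} w_\alpha \nv_\alpha$. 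The whole question then concentrates on the distribution of $\nv_{\mathrm{agg}}$.

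The core step is to verify that $\nv_{\mathrm{agg}}$ is Gaussian, for which I would use characteristic functions. Each $\nv_\alpha$ has characteristic function $\exp(-\tfrac12\sigma_{\nv}^2\|\tv\|^2)$, so by independence the characteristic function of $\nv_{\mathrm{agg}}$ factorizes as $\prod_{\alpha} \exp(-\tfrac12 w_\alpha^2 \sigma_{\nv}^2\|\tv\|^2) = \exp(-\tfrac12(\textstyle\sum_\alpha w_\alpha^2)\sigma_{\nv}^2\|\tv\|^2)$, which is precisely the characteristic function of a zero-mean Gaussian with per-coordinate variance $(\sum_\alpha w_\alpha^2)\sigma_{\nv}^2$. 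Hence $\nv_{\mathrm{agg}} \sim \pN(0,(\sum_\alpha w_\alpha^2)\sigma_{\nv}^2\,\textbf{1})$: the aggregated sample is still a clean signal corrupted by AWGN, with only the effective variance rescaled. Zero mean is retained because the weights are deterministic and each $\nv_\alpha$ is centered.

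Finally I would lift this single-step result to the full hierarchy by induction over the tiers. At tier-1 the CHs aggregate raw node data; at each subsequent tier a CH aggregates signals already of the form (clean $+$ Gaussian), and any fresh per-link AWGN injected during that hop is, by hypothesis, independent of the accumulated noise, so the inductive step is again a linear combination of independent Gaussians and the conclusion propagates to the BS. The main obstacle I anticipate is not the algebra but justifying the independence assumption rigorously: if one node's data reached a common aggregator along two overlapping routes, the corresponding noise terms would be correlated and the factorization of characteristic functions would break. I would resolve this by appealing to the tree-structured routing enforced by the protocol --- each node forwards to exactly one CH per tier and no region transmits within its own corona --- so that distinct contributions traverse edge-disjoint paths and carry independent noise, restoring the product form and completing the argument.
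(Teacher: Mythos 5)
Your proof is correct, but it takes a genuinely different route from the paper's. The paper proves the lemma only for the base case of two samples: it sets $Z = \rho P + \delta Q$ for independent Gaussians $P, Q$ and computes the CDF $F_Z(z)$ directly as a double integral of the bivariate density $(2\pi)^{-1}\exp(-(p^2+q^2)/2)$ over the half-plane $\rho p + \delta q \le z$; circular symmetry of this integrand lets the paper rotate coordinates and read off $F_Z(z) = \Delta\bigl(z/\sqrt{\rho^2+\delta^2}\bigr)$, i.e.\ $Z$ is zero-mean Gaussian with variance $\rho^2+\delta^2$. That is the classical geometric rotation argument, and the paper stops there, leaving the extension beyond $Z\vert_{L=2}$ and beyond a single aggregation hop implicit. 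You instead work with characteristic functions, which buys three things the paper's argument does not explicitly deliver: (i) an arbitrary finite number of summands in one step, since the product $\prod_\alpha \exp(-\tfrac12 w_\alpha^2\sigma_{\nv}^2\|\tv\|^2)$ collapses immediately, with no need for pairwise induction or rotation geometry; (ii) the vector-valued case, matching the actual signal model $\yv = \xv + \nv$ with $\nv \in \mathbb{C}^M$, whereas the paper's computation is scalar; and (iii) an explicit induction over the tiers together with the observation --- absent from the paper --- that independence of the noise contributions must be justified by the tree-structured (edge-disjoint) routing, without which the factorization fails. The paper's approach is more elementary and self-contained (no appeal to characteristic functions or uniqueness of their inversion), which is arguably better suited to the paper's brevity; yours is the more general and more rigorous argument, and it actually proves the statement as used (aggregation of many samples across multiple hops) rather than the two-sample prototype.
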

\begin{proof}
	To show this, we consider two independent Gaussian random data samples $P \text{ and } Q$ sent by nodes $\left.\mu_\alpha \text{ and } \mu_\gamma, \text{ both } \in \kappa\right.$. For data aggregated by CH, we let $\left.Z=\rho P+\delta Q\right.$. Without loss of generality, let $\rho \text{ and }\delta$ be positive real numbers because for $\rho<0$, $P$ would be replaced by $-P$, and then we would write $\mid \rho\mid$ instead of $\rho$. The commutative probability function can be written as:
	\begin{multline}
		F_Z(z)=P\{Z\le z\}=P\{\rho P+\delta Q\le z\}\\
		=\int\int_{\rho P+\delta Q\le z}\varphi(p)\varphi(q)
		dpdq
	\end{multline}
	where $\varphi(.)$ represents the unit Gaussian density function. However, as the integrand $(2\pi)^{-1}\exp(-(p^2+q^2)/2)$ possesses circular symmetry, the numerical property of this integral is a function of length of the origin from $\left.\rho p+\delta q=z\right.$. Consequently using coordinates rotation, we can conclude
	\begin{multline}
		F_Z(z)=\int_{p=-\infty}^{\zeta}\int_{q=-\infty}^{q=\infty}\varphi(p)\varphi(q)dpdq=\Delta(\zeta)
	\end{multline}
	where $\zeta=\frac{z}{\sqrt{\rho^2+\delta^2}},\text{ and }\Delta(.)$ shows standard Gaussian CDF. Hence, the CDF of $Z\vert_{L=2}$ is a zero-mean Gaussian random variable having total variance equal to $\left.\rho^2+\delta^2\right.$.
\end{proof}

\begin{figure}[t]
	\centering
	\includegraphics[width=0.85\linewidth]{./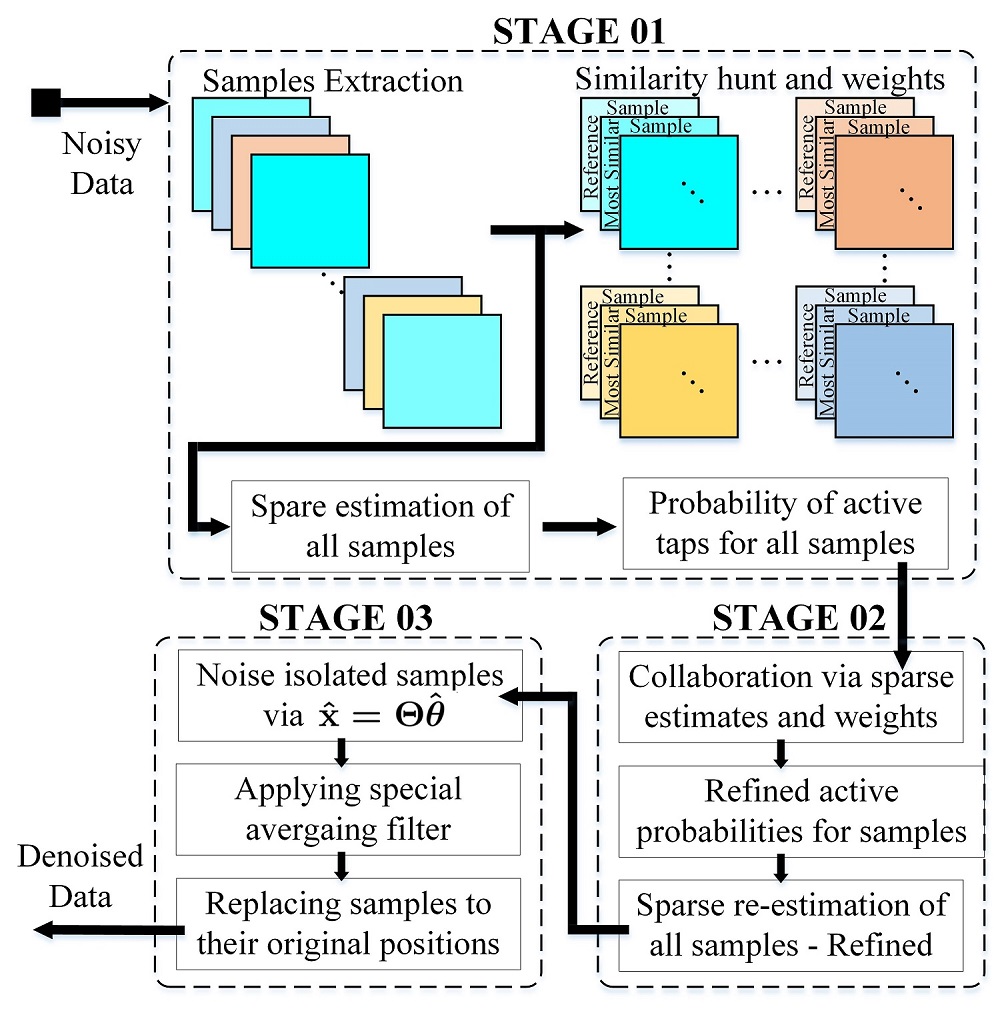}
	\caption{Flowchart of Data Denoising}
	\label{fig:fig4}
\end{figure}

\section{Results and Discussions}
\label{Results_Discussions}
In this section, we compare our proposed scheme with the state-of-the-art and traditional routing protocols such as LEACH \cite{926982}, TEEN \cite{925197}, SEP \cite{smaragdakis2004sep}, DEEC \cite{QING20062230} and DDR \cite{S2RT4387429.OW12N}. We use the values given in Table \ref{tab1}, and our experimentation is divided into two main scenarios: 1) efficient resource utilization, and 2) data denoising. The comparison is carried out over $\left.L = 100, 1000 \text{ and } 10000\right.$ nodes with following metrics: stability and instability period, network lifetime, energy consumption, computational complexity, peak signal-to-noise ratio (PSNR) and structural similarity (SSIM)~index.

A comparison of stability period for $L = 100$ is shown in Fig. \ref{fig:fig5}. This figure demonstrates the number of alive nodes over 8000 sensing rounds. It is evident from the figure that our proposed scheme significantly outperforms all the protocol and shows promising results. The first node die time of our approach is around 2900, while that of LEACH, TEEN, SEP, DEEC and DDR is around 800, 1900, 1600, 2000, and 1400, respectively. Similarly, Fig. \ref{fig:fig6} illustrates the all node die time (ADT) of these protocols for $L =$ 100. It can be clearly seen that the ADT of our method is $\sim$6390, $\sim$5290, $\sim$5490, $\sim$5190 and $\sim$4290 better than LEACH, TEEN, SEP, DEEC and DDR, respectively. We show that our scheme provides the best ADT and hence is the most suitable candidate for practical applications. This increase in ADT is associated with our proposed layer-adaptive 3-tier architecture equipped with effective CHs election mechanism.

\begin{figure}[t]
	\centering
	\includegraphics[width=1\linewidth]{./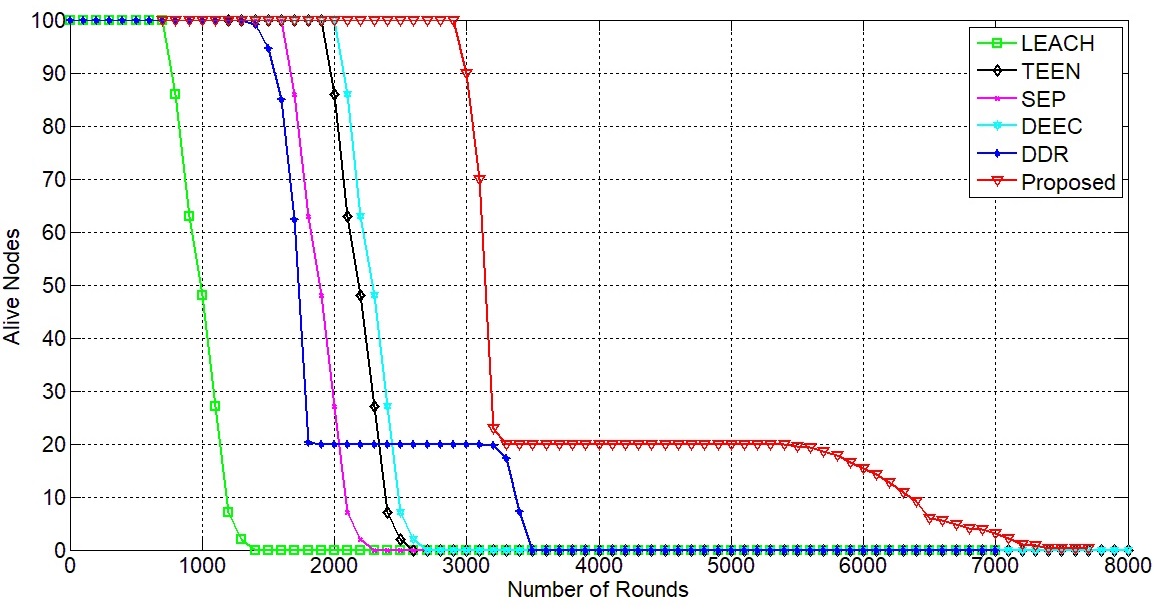}
	\caption{Stability Period}
	\label{fig:fig5}
\end{figure}
\begin{figure}[b]
	\centering
	\includegraphics[width=1\linewidth]{./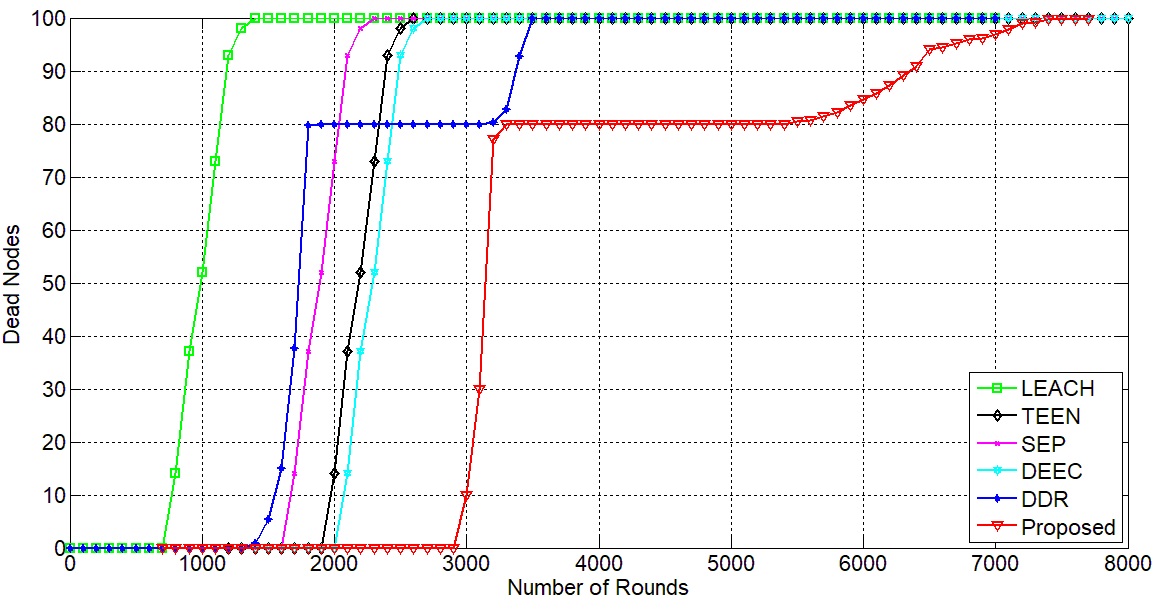}
	\caption{Instability Period}
	\label{fig:fig6}
\end{figure}

We provide a comparison of energy efficient resource utilization in Fig. \ref{fig:fig7}. Here, we show that all protocols start with same energy levels including both reactive and proactive techniques. However, based on the optimized communication method, our proposed scheme demonstrates outstanding results beating all the contestants. In Fig. \ref{fig:fig8}, we compare the network lifetime of our proposed method with LEACH and DDR for $\left.L = 100, 1000 \text{ and } 10000\right.$. It is validated that our protocol is equally competitive on large scale network scenarios outperforming each of the traditional methods.

The computational complexity of our approach is mainly dominated by the communication among nodes which fortunately yield a computationally convenient implementation of our method as compared with other protocols. A depiction of this is shown in Fig. \ref{fig:fig9}, where we compare the computational time consumed by the contestant methods using a $\left.\text{2.20 GHz Intel Core i7-3632QM}\right.$ machine for different number of nodes. This figure proves the robustness of our protocol by showing superior performance, hence, lending itself the most preferable choice for real-time applications.

\begin{table*}[h]
	\caption{Comparison of Denoising Image Data Samples in Terms of PSNR and SSIM}
	\centering
	\begin{tabular}{|c|c|c|c|c|c|c|c|c|c|c|}\hline\hline
		\multicolumn{2}{|c|}{Noise Level $\sigma_n$} & Cameraman & Lena & Barbara & House & Peppers & Living Room & Boat & Mandrill \\ \cline{1-10}
		\hline\hline
		\multirow{2}{*}{5}
		& Denoised & \textbf{37.94/0.85} & \textbf{36.97/0.94} & \textbf{36.38/0.98} & \textbf{39.24/0.71} & \textbf{36.60/0.96} & \textbf{36.37/0.96} & \textbf{36.24/0.94} & \textbf{35.42/0.95}  \\ \cline{2-10}
		& Noisy & 34.04/0.65 & 34.11/0.88 & 34.17/0.95 & 34.06/0.63 & 34.16/0.92 & 34.16/0.92 & 34.12/0.89 & 34.18/0.93 \\ \hline
		
		\multirow{2}{*}{10}
		& Denoised & \textbf{33.28/0.75} & \textbf{32.64/0.90} & \textbf{31.88/0.94} & \textbf{35.28/0.67} & \textbf{32.00/0.92} & \textbf{31.66/0.90} & \textbf{31.59/0.86} & \textbf{30.88/0.85} \\ \cline{2-10}
		& Noisy & 28.07/0.53 & 28.03/0.76 & 28.18/0.87 & 28.07/0.51 & 28.08/0.81 & 28.21/0.80 & 28.08/0.75 & 27.99/0.80 \\ \hline
		
		\multirow{2}{*}{15}
		& Denoised & \textbf{31.21/0.69} & \textbf{30.44/0.86} & \textbf{29.56/0.91} & \textbf{32.63/0.61} & \textbf{29.74/0.89} & \textbf{29.44/0.85} & \textbf{29.11/0.76} & \textbf{28.51/0.75}  \\ \cline{2-10}
		& Noisy & 24.56/0.45 & 24.63/0.66 & 24.59/0.77 & 24.57/0.44 & 24.72/0.72 & 24.62/0.68 & 24.59/0.63 & 24.56/0.66  \\ \hline
		
		\multirow{2}{*}{20}
		& Denoised & \textbf{29.23/0.63} & \textbf{28.77/0.82} & \textbf{27.93/0.88} & \textbf{31.33/0.58} & \textbf{28.11/0.85} & \textbf{27.76/0.79} & \textbf{27.48/0.69} & \textbf{27.09/0.67}  \\ \cline{2-10}
		& Noisy & 22.09/0.40 & 22.16/0.58 & 22.09/0.69 & 22.02/0.38 & 22.13/0.63 & 22.07/0.58 & 22.05/0.53 & 21.98/0.54  \\ \hline
		
		\multirow{2}{*}{25}
		& Denoised & \textbf{27.90/0.60} & \textbf{27.59/0.78} & \textbf{26.36/0.83} & \textbf{29.96/0.55} & \textbf{26.60/0.81} & \textbf{26.43/0.72} & \textbf{26.45/0.63} & \textbf{26.32/0.62}  \\ \cline{2-10}
		& Noisy & 20.22/0.36 & 20.13/0.50 & 20.12/0.60 & 20.16/0.33 & 20.19/0.56 & 20.01/0.49 & 20.37/0.46 & 20.07/0.45 \\ \hline
		
		\multirow{2}{*}{50}
		& Denoised & \textbf{24.19/0.45} & \textbf{23.97/0.62} & \textbf{22.75/0.68} & \textbf{25.80/0.45} & \textbf{23.09/0.68} & \textbf{23.43/0.54} & \textbf{23.42/0.45} & \textbf{24.17/0.47} \\ \cline{2-10}
		& Noisy & 14.13/0.21 & 14.11/0.28 & 14.10/0.33 & 14.03/0.19 & 14.17/0.33 & 14.19/0.25 & 14.17/0.24 & 14.16/0.20 \\ \hline
		
		\multirow{2}{*}{100}
		& Denoised & \textbf{20.67/0.25} & \textbf{20.50/0.48} & \textbf{20.11/0.50} & \textbf{22.14/0.27} & \textbf{19.62/0.49} & \textbf{20.96/0.33} & \textbf{20.64/0.26} & \textbf{21.30/0.28} \\ \cline{2-10}
		& Noisy & 08.18/0.10 & 08.13/0.11 & 08.21/0.13 & 08.09/0.07 & 08.16/0.13 & 08.10/0.09 & 08.10/0.09 & 08.18/0.06 \\ \hline\hline
	\end{tabular}
	\label{tab2}
\end{table*}
\begin{figure}[h]
	\centering
	\includegraphics[width=1\linewidth]{./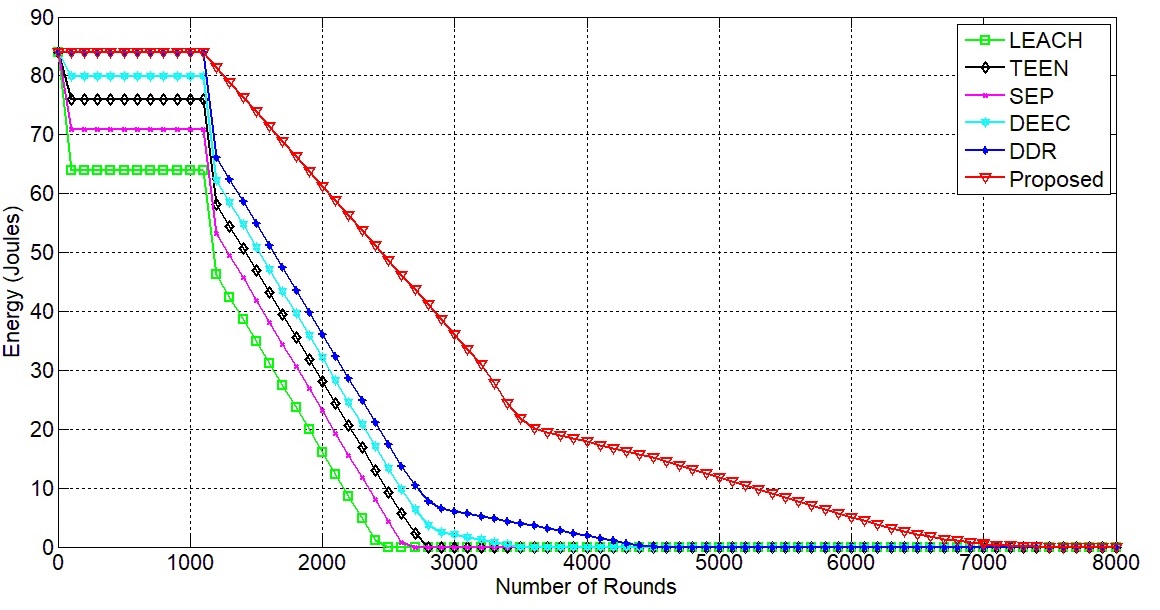}
	\caption{Energy Utilization Comparison}
	\label{fig:fig7}
\end{figure}
\begin{figure}[h]
	\centering
	\includegraphics[width=1\linewidth]{./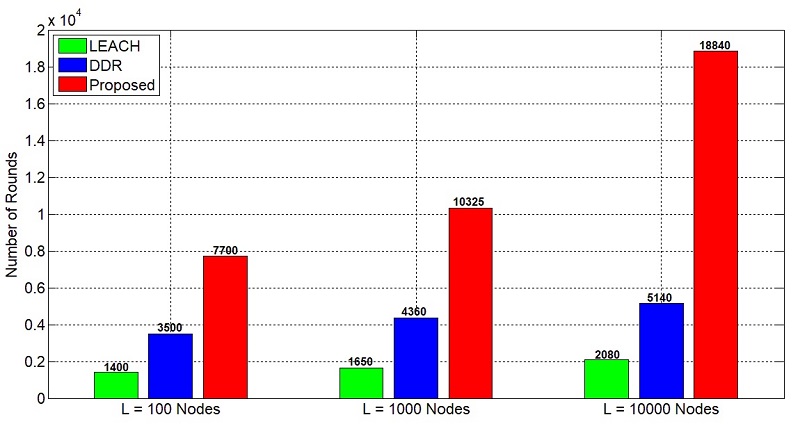}
	\caption{Network Lifetime}
	\label{fig:fig8}
\end{figure}
\begin{figure}[h]
	\centering
	\includegraphics[width=1\linewidth]{./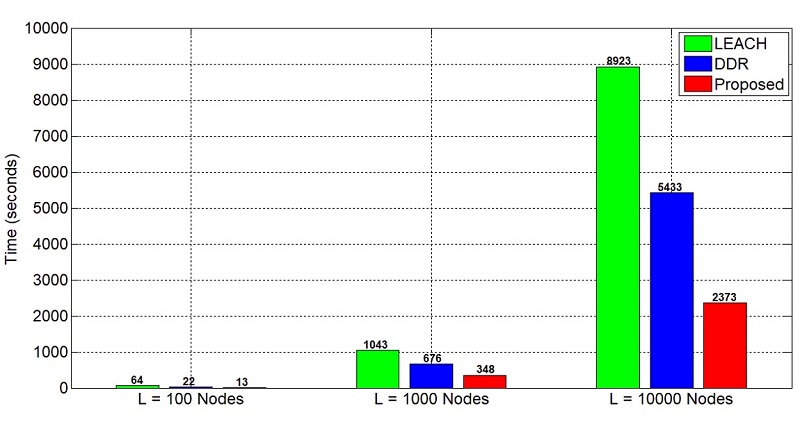}
	\caption{Computational Overload Comparison}
	\label{fig:fig9}
\end{figure}


Finally, the detailed denoising results of various standard test data images are presented in Table \ref{tab2}. We opt globally adopted PSNR and SSIM as evaluation metrics to prove that the denoising section of our proposed framework produces equally promising outcomes. The provided table summarizes denoising results of a number of images, as PSNR/SSIM, over a range of noise levels, i.e., $\left.\sigma_n = [5,10,15,20,25,50,100]\right.$. For experimentation, we transmitted various images among deployed nodes and showed that the resultant images received at the receiver suffers from Gaussian noise. The PSNR and SSIM values of the corresponding received noisy images are shown in the table. In comparison with our denoised images, we show that a significant amount of improvement is achieved in terms of the noise being removed, and the actual data is recovered to a greater extent. Consequently, these results confirm that our proposed framework is indeed an effective and robust model for real-time scenarios in WSNs which outperforms many traditionally proposed routing protocols.

\section{Conclusion}
\label{Conclusion}
We proposed a novel framework for energy efficient and data recoverable transmissions in WSNs. Equipped with layer-adaptive communication architecture, we managed to potentially minimize the energy holes. The presented coverage model helped us optimize the coverage area thereby reducing coverage holes. We also presented a sparse-domain based collaborative denoising scheme to combat the unwanted noisy components introduced in the data. Our proposed protocol afforded itself a computationally convenient complexity and outperformed traditional protocols by a significant margin.

%

\newpage
\begin{figure*}[t]
	\centering
	\includegraphics[width=1\linewidth]{./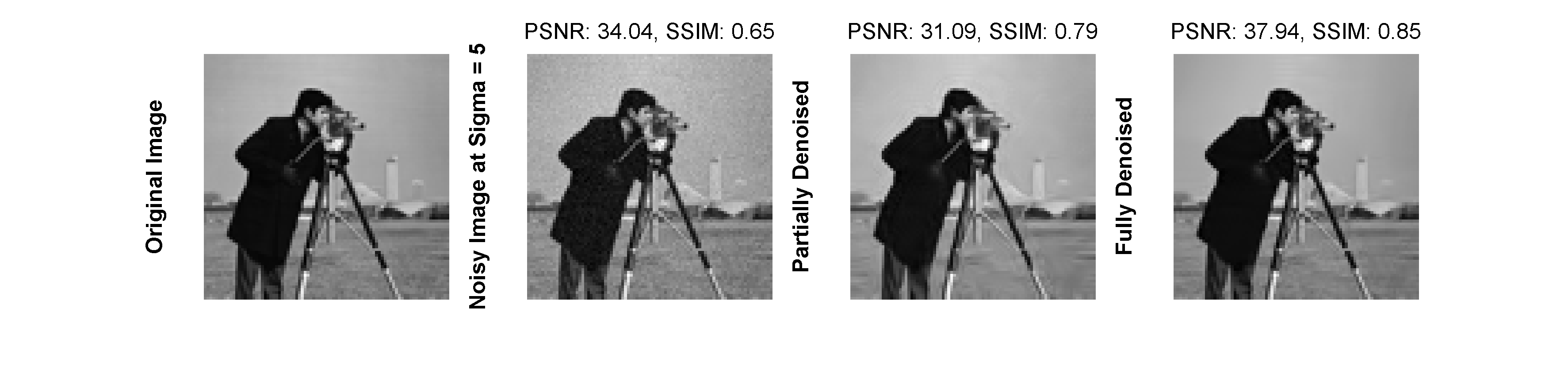}
	\includegraphics[width=1\linewidth]{./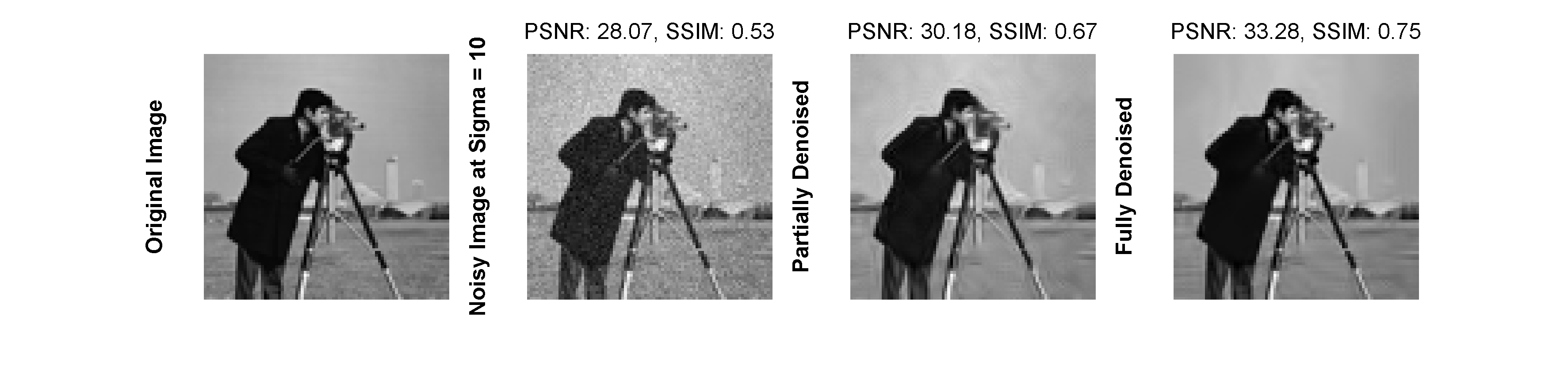}
	\includegraphics[width=1\linewidth]{./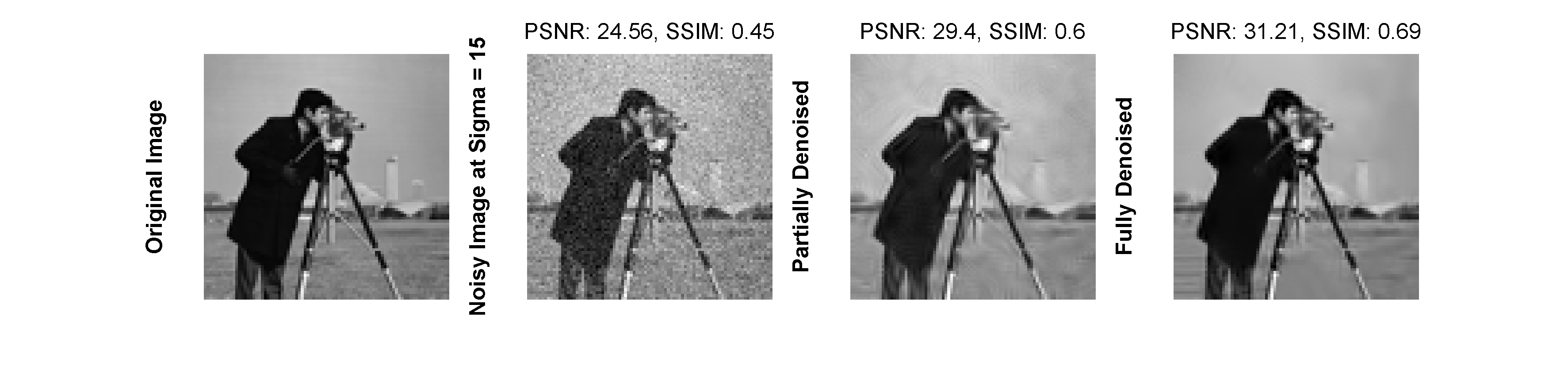}
	\includegraphics[width=1\linewidth]{./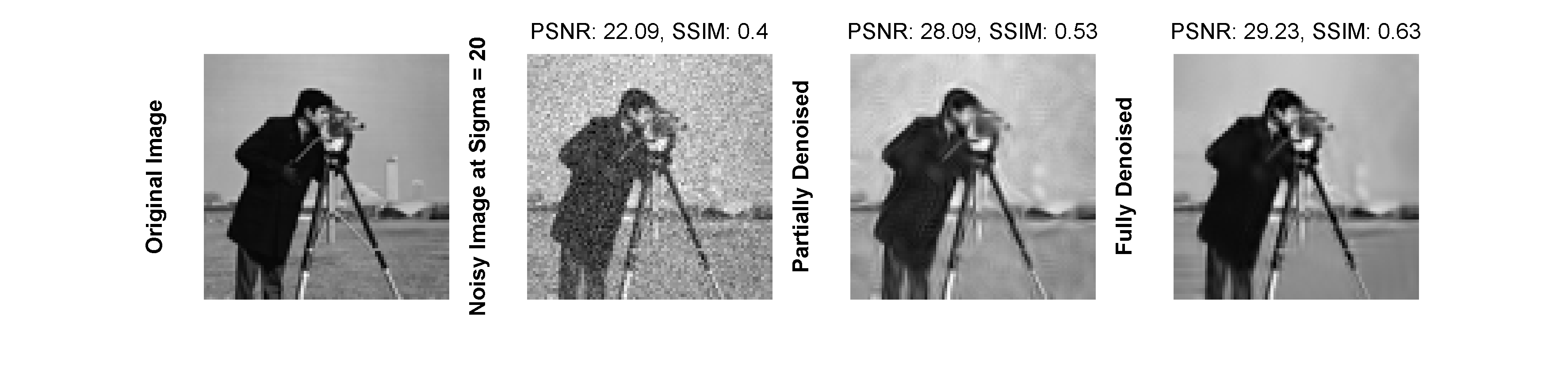}
	\includegraphics[width=1\linewidth]{./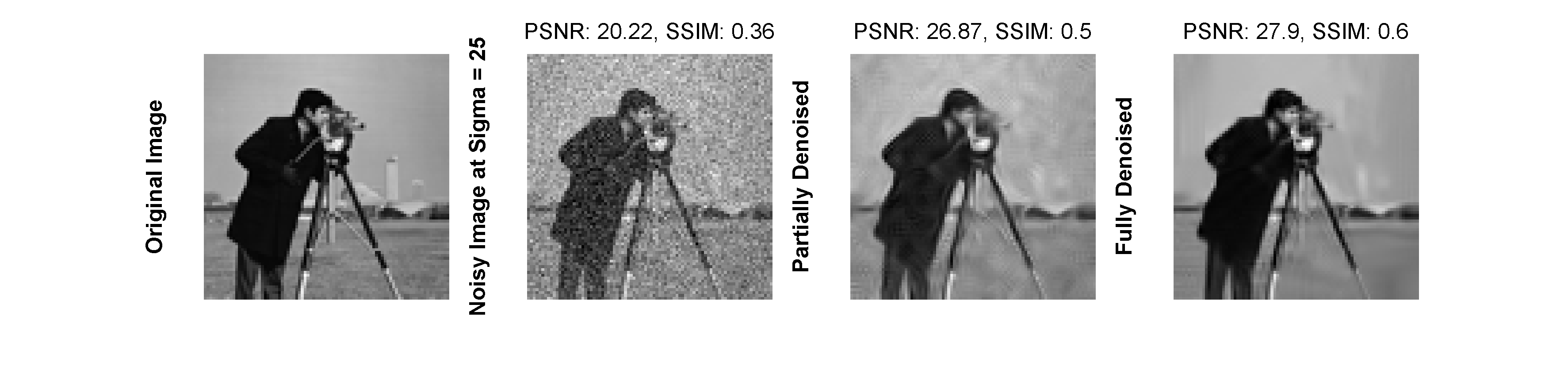}
\end{figure*}
\newpage
\begin{figure*}[t]
	\centering
	\includegraphics[width=1\linewidth]{./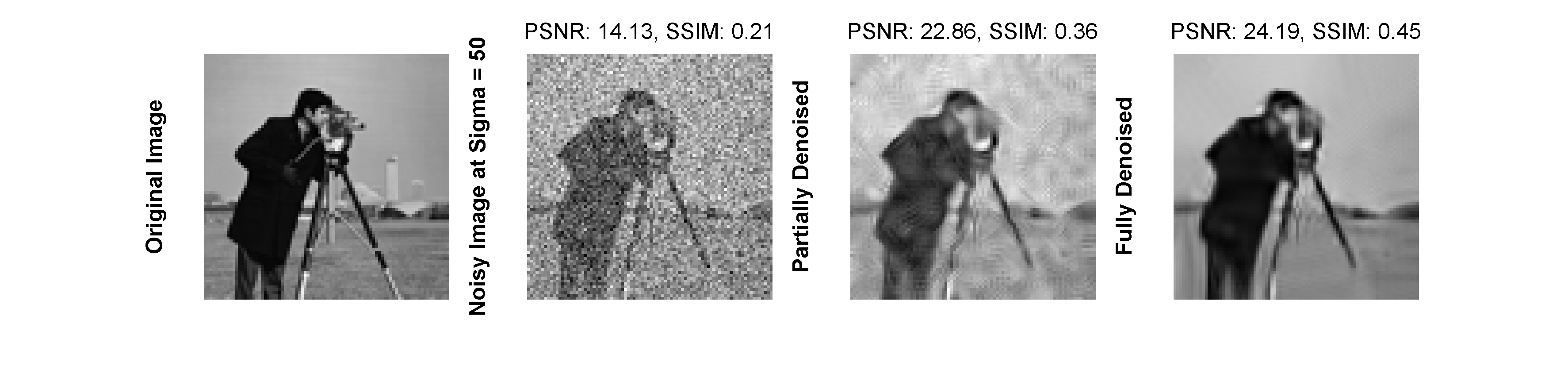}
	\includegraphics[width=1\linewidth]{./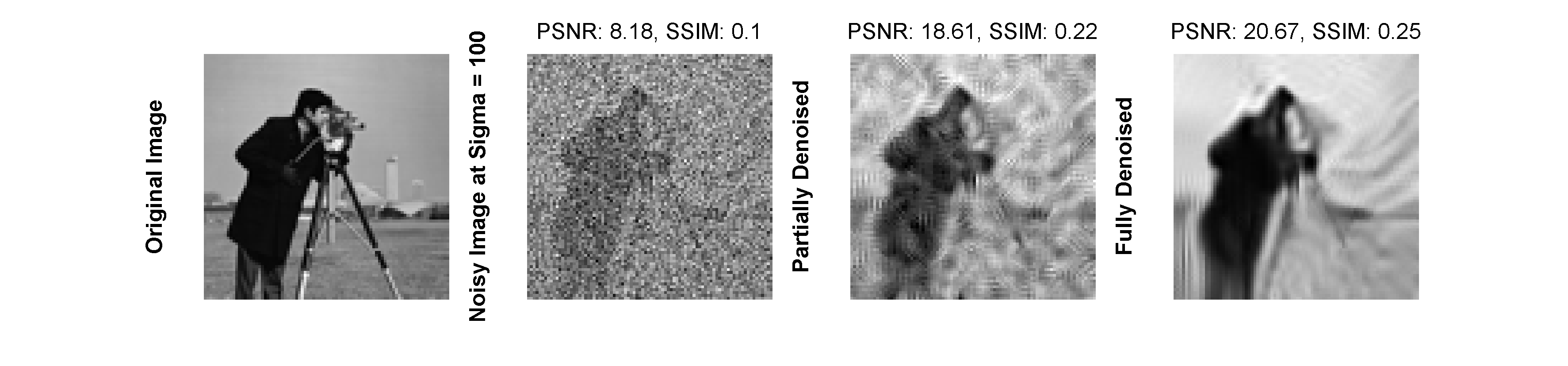}
	\includegraphics[width=1\linewidth]{./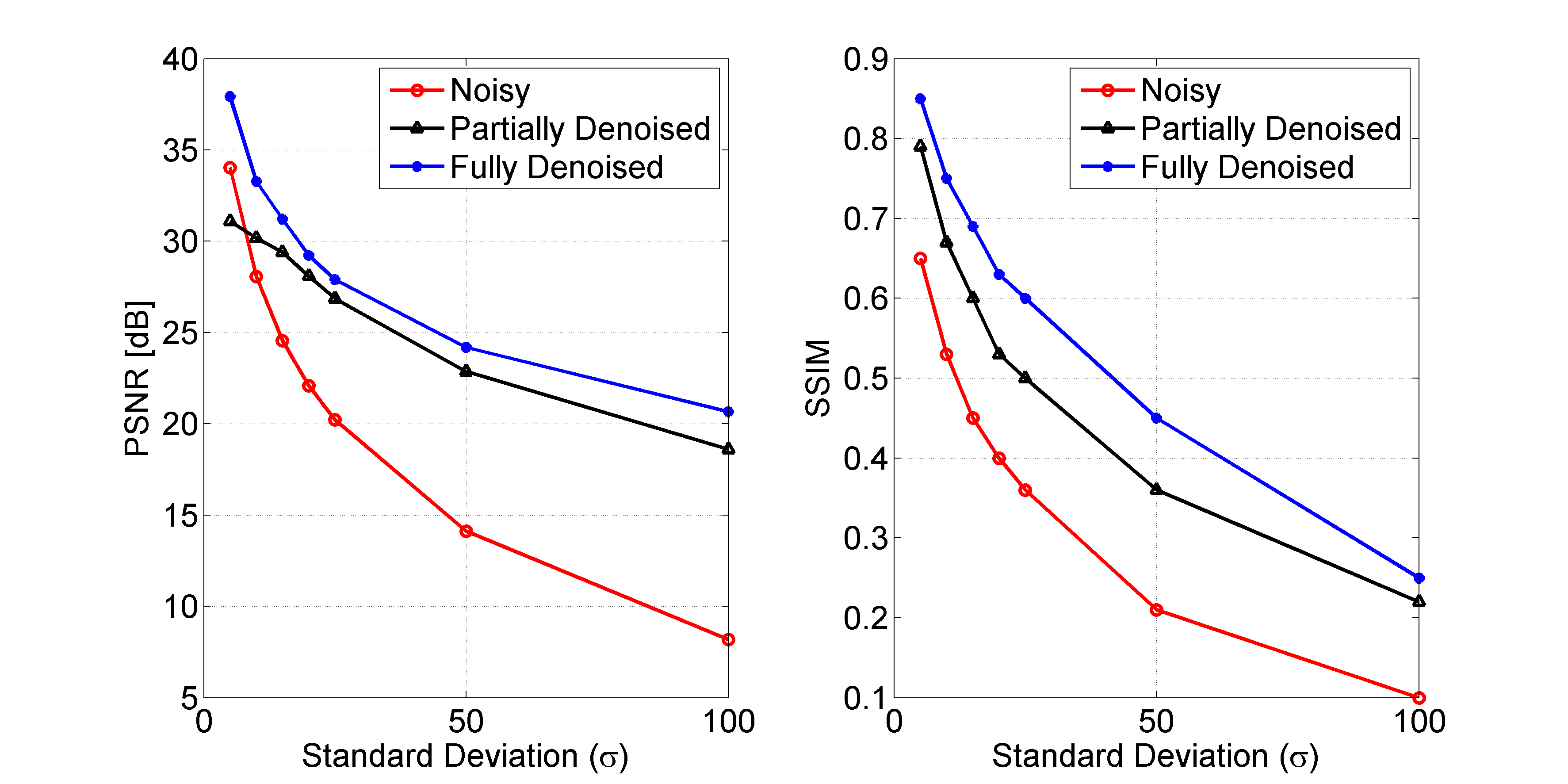}
	\caption{Denoising $256 \times256$ grayscale \textit{Cameraman} standard test data images over noise $\sigma =  [5,10,15,20,25,50,100]$ when received at a node $\mu_\alpha$. Each row represent an original image, a noisy image, a partially denoised, and a fully denoised image, respectively, corrupted by a specific level of additive white Gaussian noise (AWGN). The graphical results in the end show PSNR [dB] and SSIM results in the form of graphs.}
\end{figure*}

\newpage
\begin{figure*}[t]
	\centering
	\includegraphics[width=1\linewidth]{./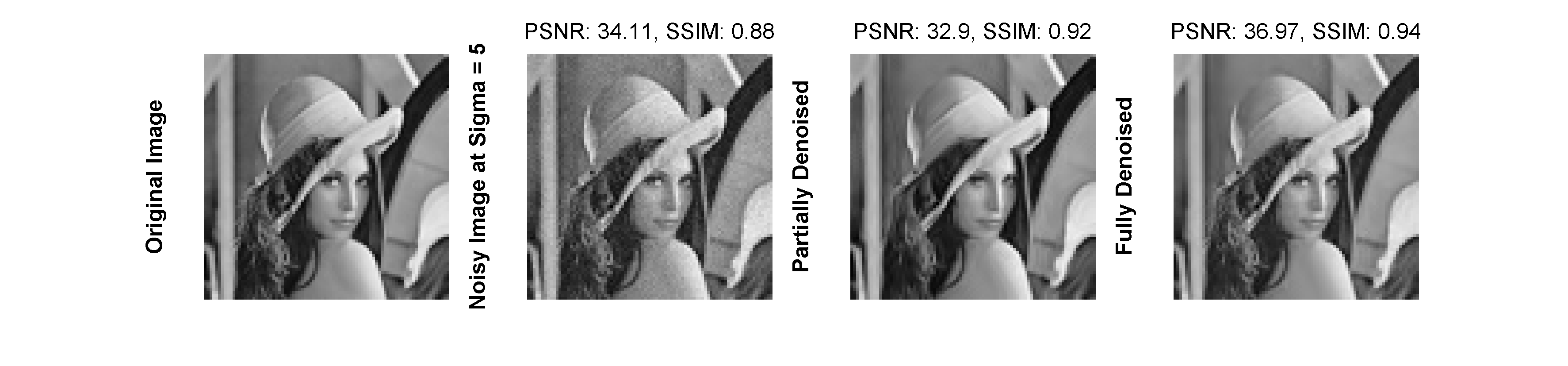}
	\includegraphics[width=1\linewidth]{./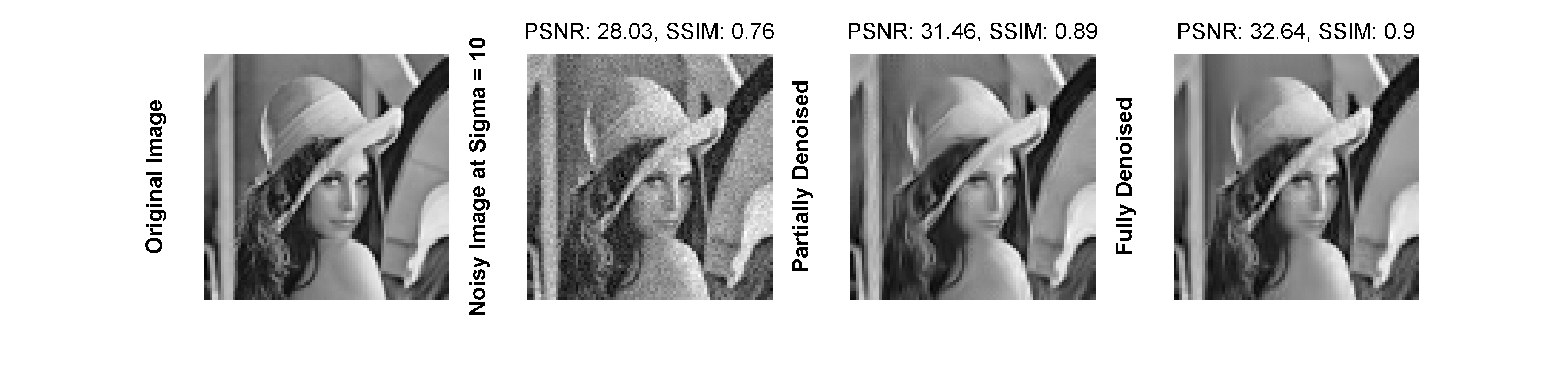}
	\includegraphics[width=1\linewidth]{./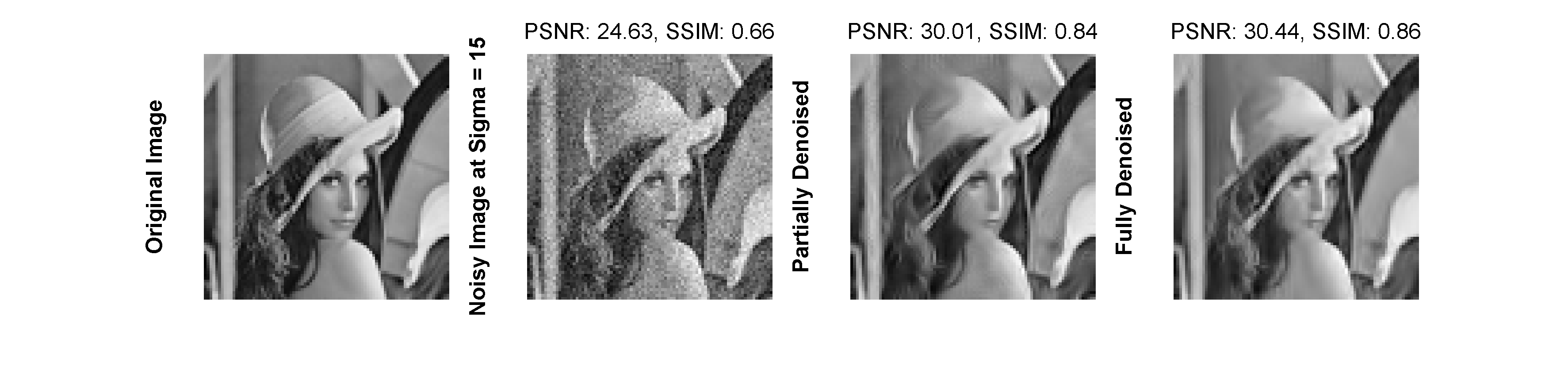}
	\includegraphics[width=1\linewidth]{./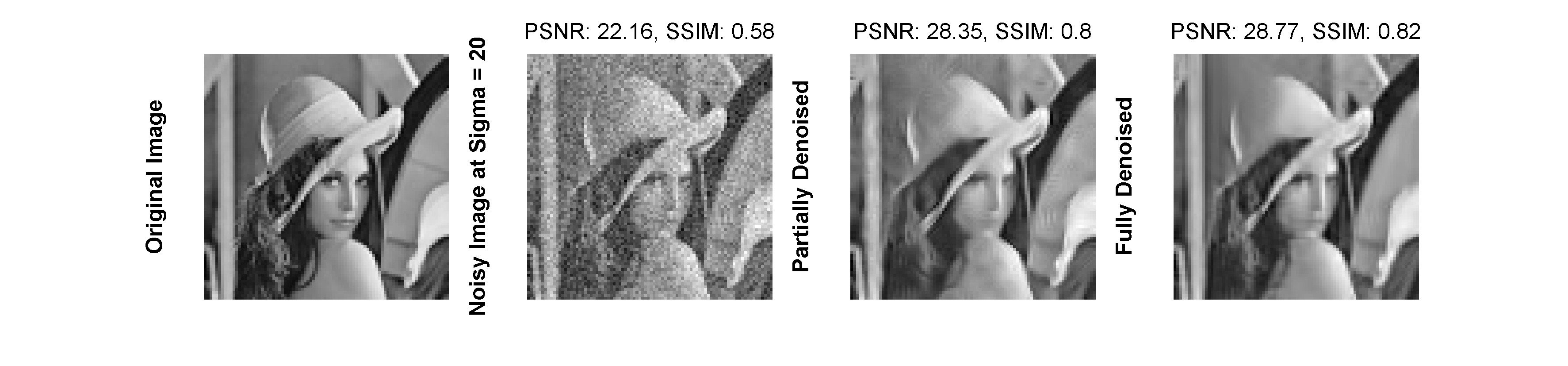}
	\includegraphics[width=1\linewidth]{./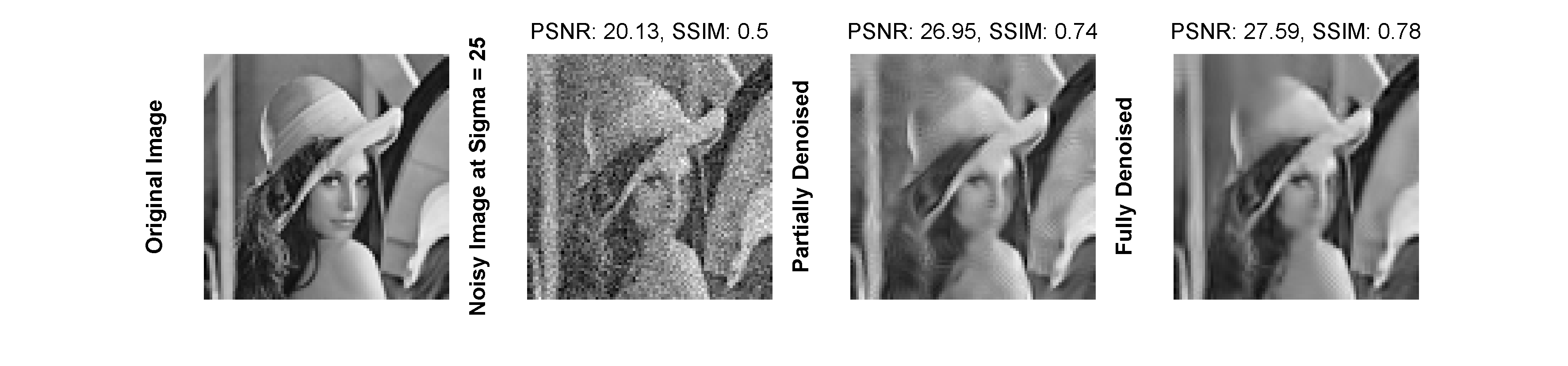}
\end{figure*}
\newpage
\begin{figure*}[t]
	\centering
	\includegraphics[width=1\linewidth]{./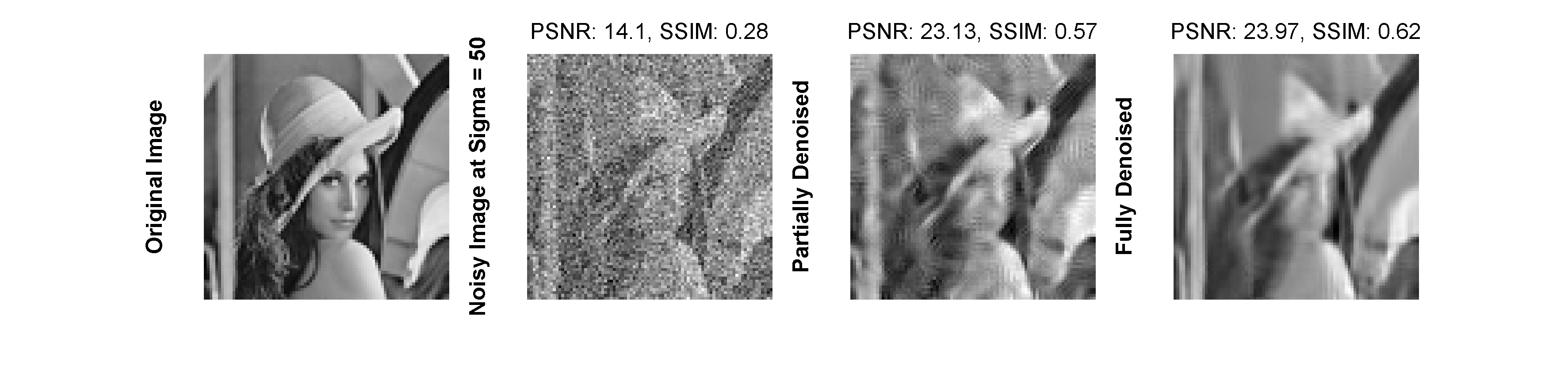}
	\includegraphics[width=1\linewidth]{./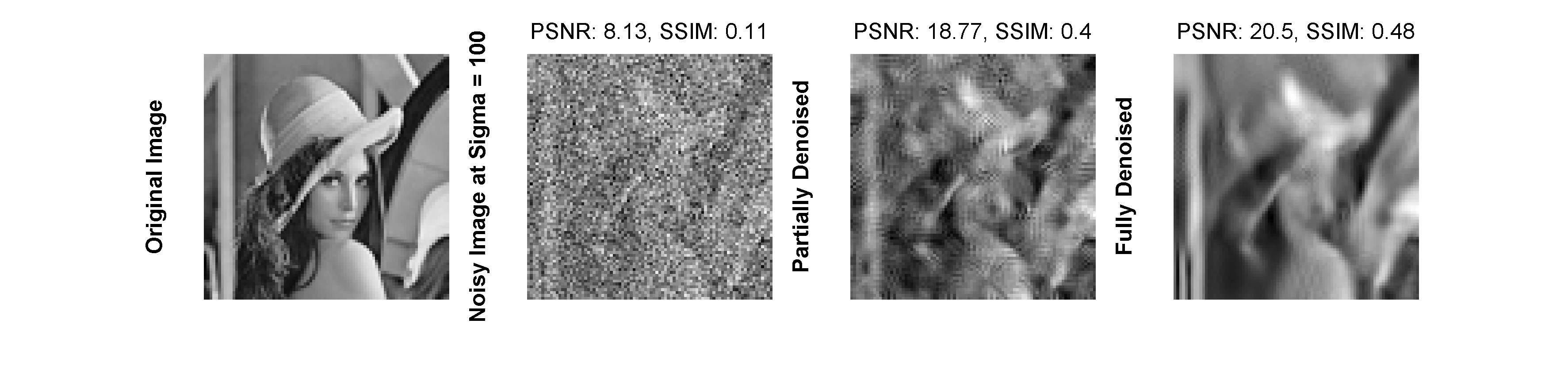}
	\includegraphics[width=1\linewidth]{./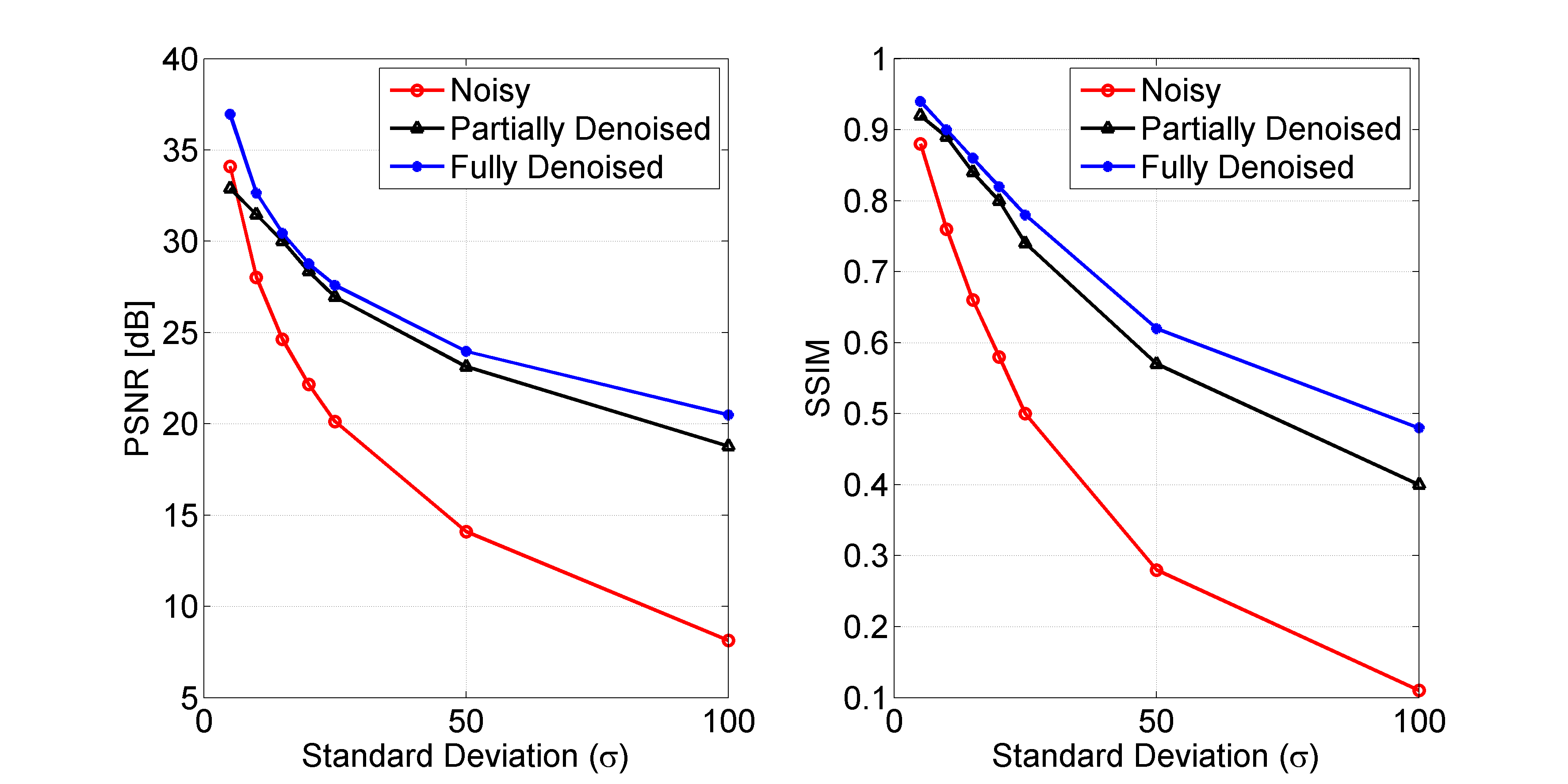}
	\caption{Denoising $256 \times256$ grayscale \textit{Lena} standard test data images over noise $\sigma =  [5,10,15,20,25,50,100]$ when received at a node $\mu_\alpha$. Each row represent an original image, a noisy image, a partially denoised, and a fully denoised image, respectively, corrupted by a specific level of additive white Gaussian noise (AWGN). The graphical results in the end show PSNR [dB] and SSIM results in the form of graphs.}
\end{figure*}

\newpage
\begin{figure*}[t]
	\centering
	\includegraphics[width=1\linewidth]{./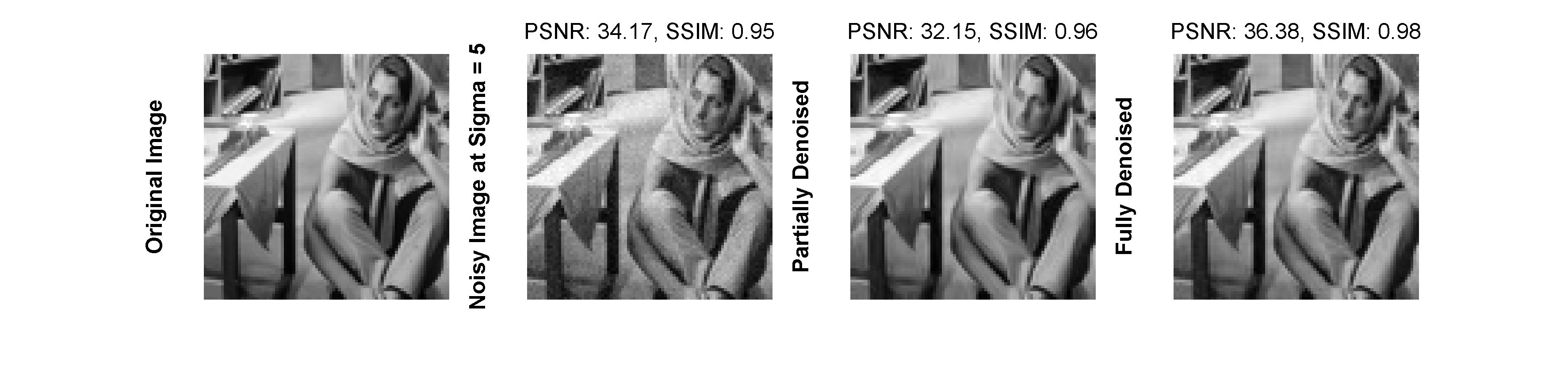}
	\includegraphics[width=1\linewidth]{./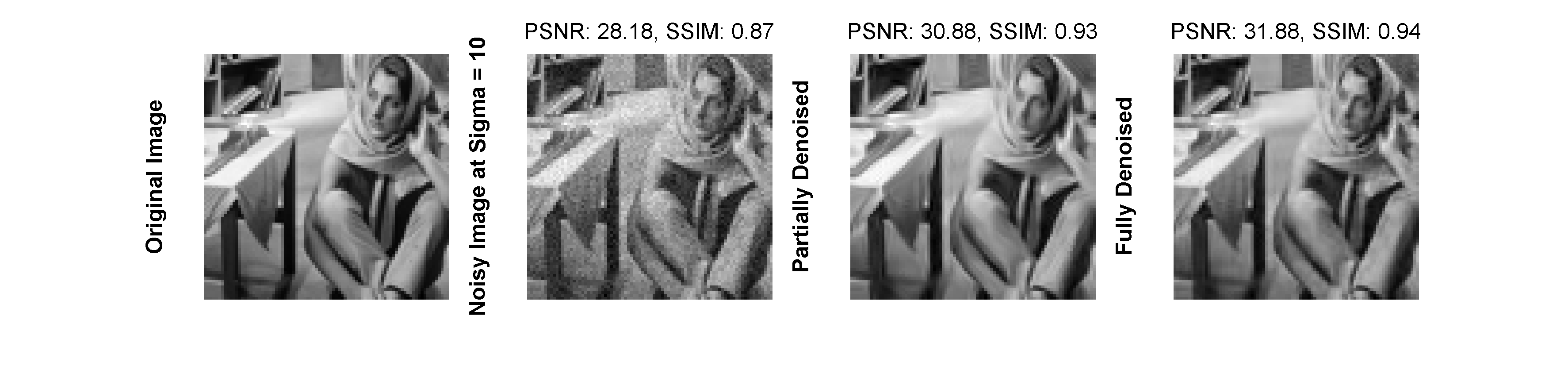}
	\includegraphics[width=1\linewidth]{./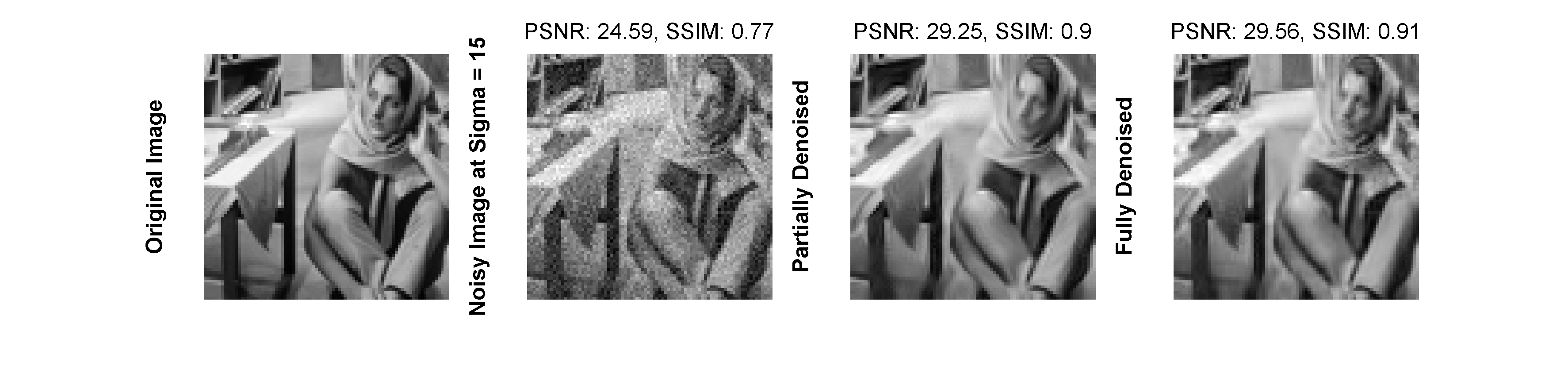}
	\includegraphics[width=1\linewidth]{./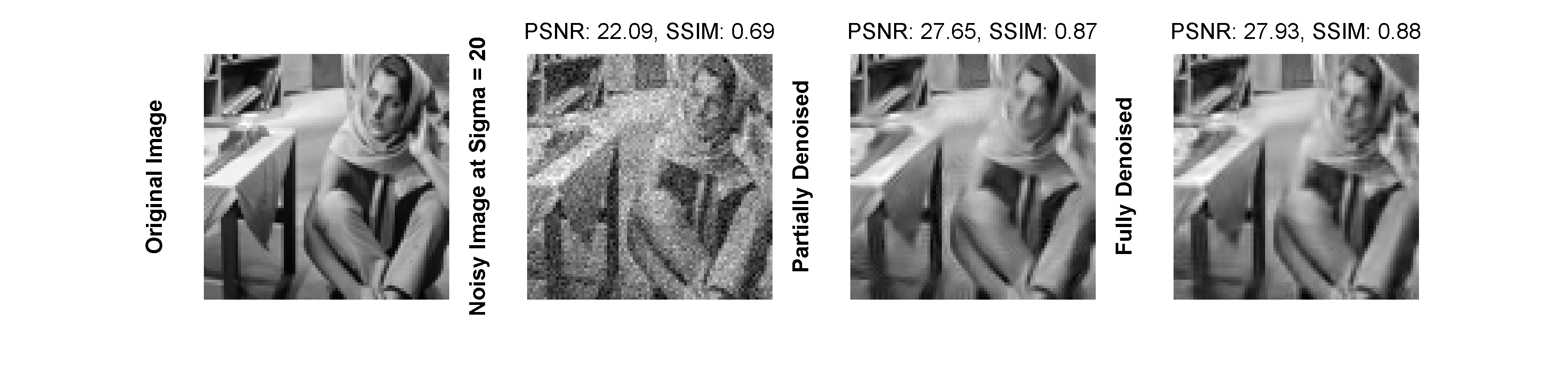}
	\includegraphics[width=1\linewidth]{./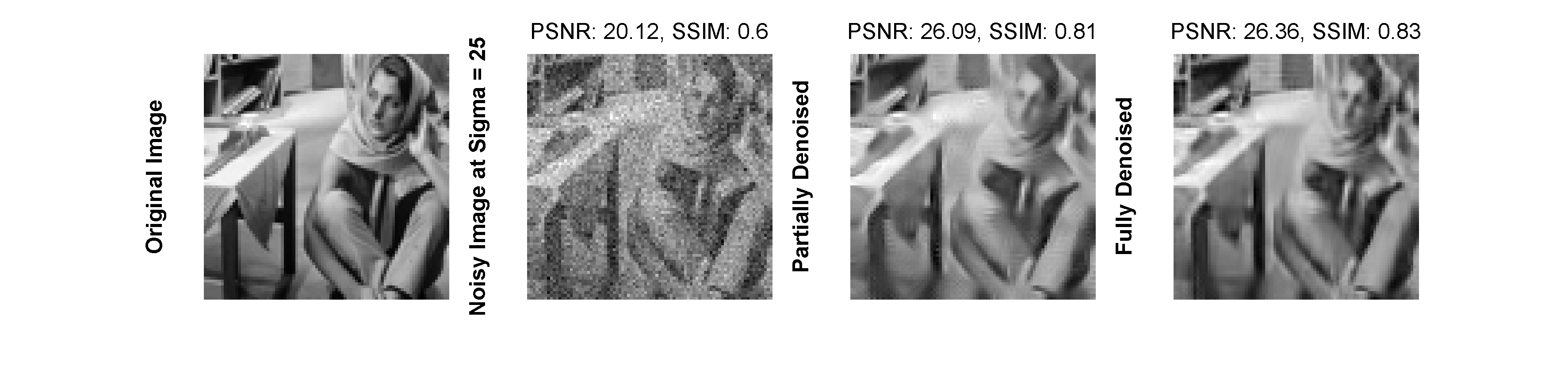}
\end{figure*}
\newpage
\begin{figure*}[t]
	\centering
	\includegraphics[width=1\linewidth]{./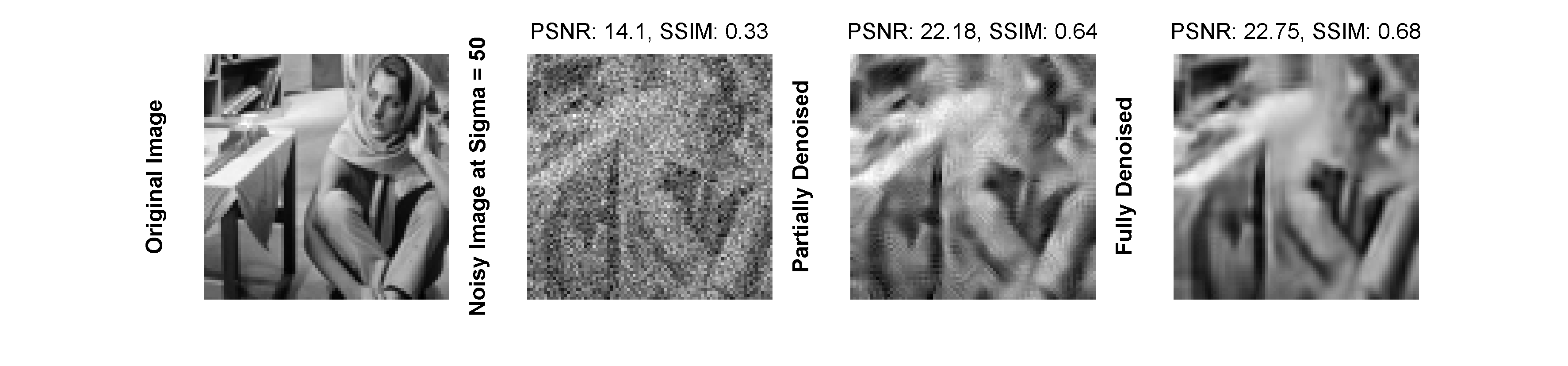}
	\includegraphics[width=1\linewidth]{./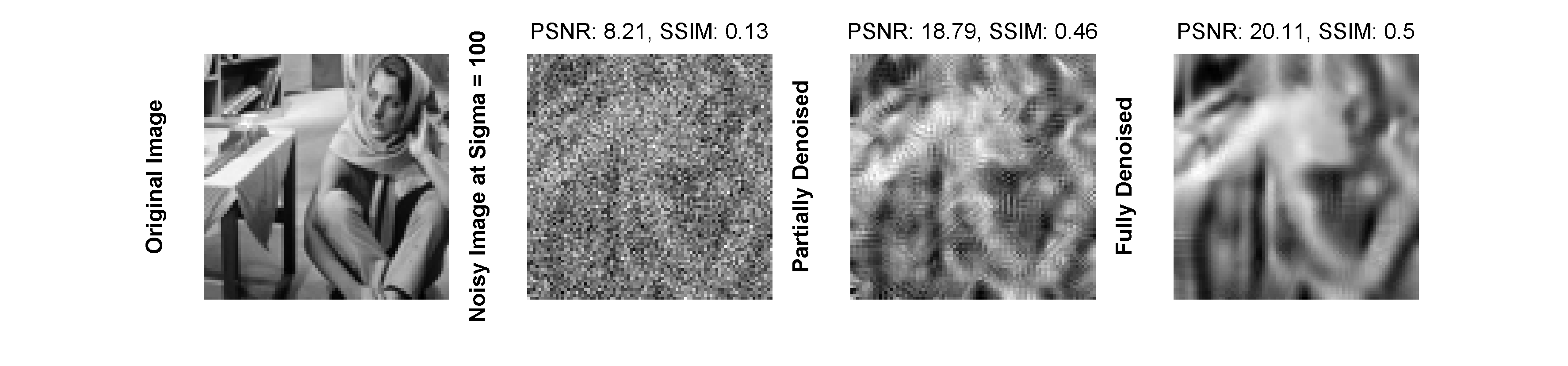}
	\includegraphics[width=1\linewidth]{./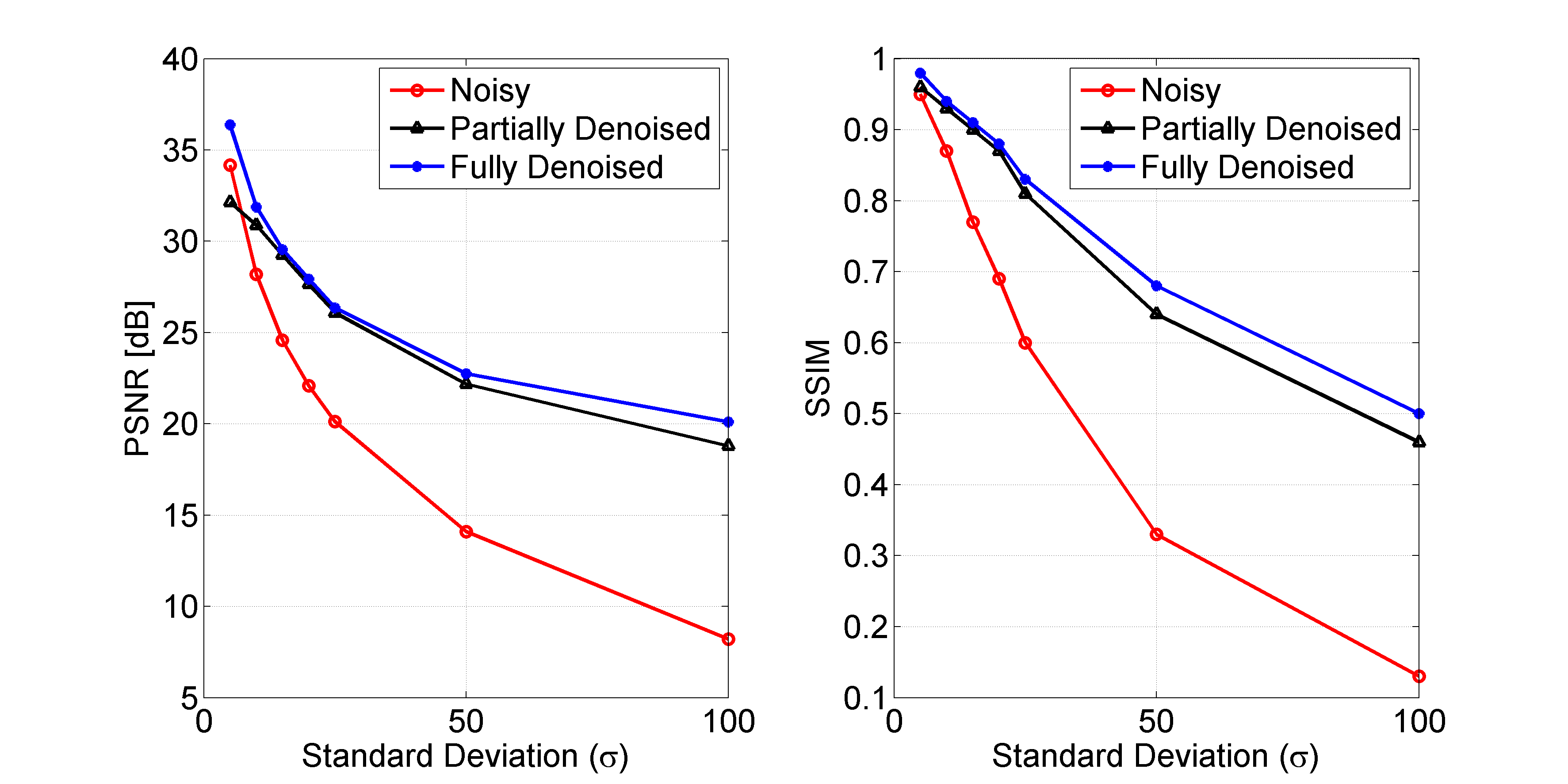}
	\caption{Denoising $256 \times256$ grayscale \textit{Barbara} standard test data images over noise $\sigma =  [5,10,15,20,25,50,100]$ when received at a node $\mu_\alpha$. Each row represent an original image, a noisy image, a partially denoised, and a fully denoised image, respectively, corrupted by a specific level of additive white Gaussian noise (AWGN). The graphical results in the end show PSNR [dB] and SSIM results in the form of graphs.}
\end{figure*}

\newpage
\begin{figure*}[t]
	\centering
	\includegraphics[width=1\linewidth]{./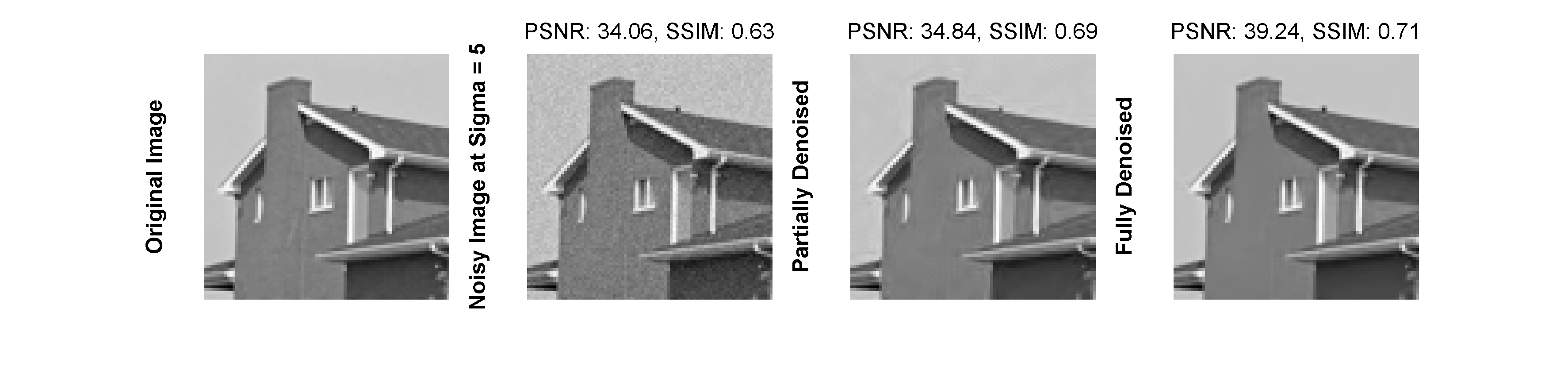}
	\includegraphics[width=1\linewidth]{./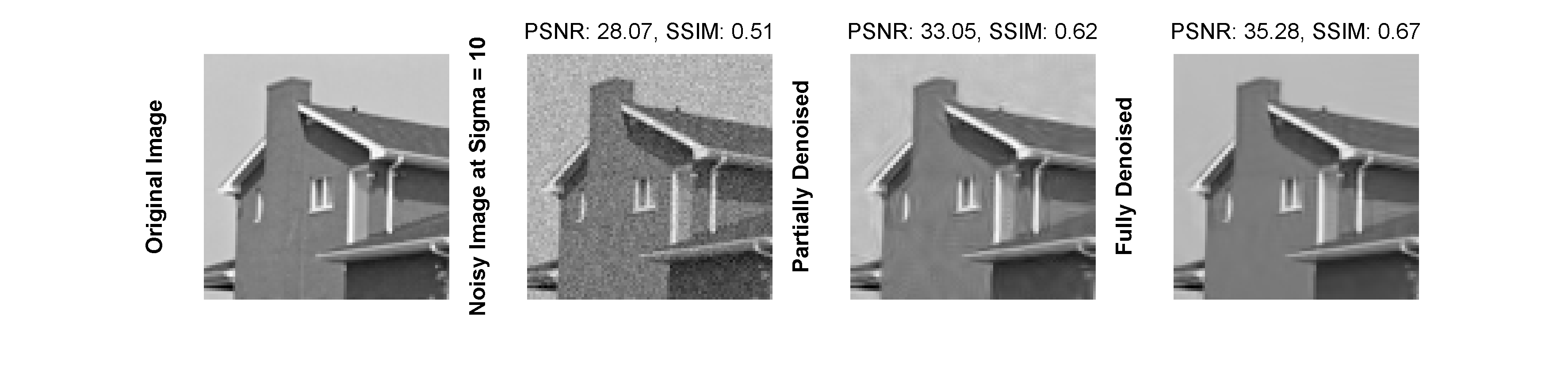}
	\includegraphics[width=1\linewidth]{./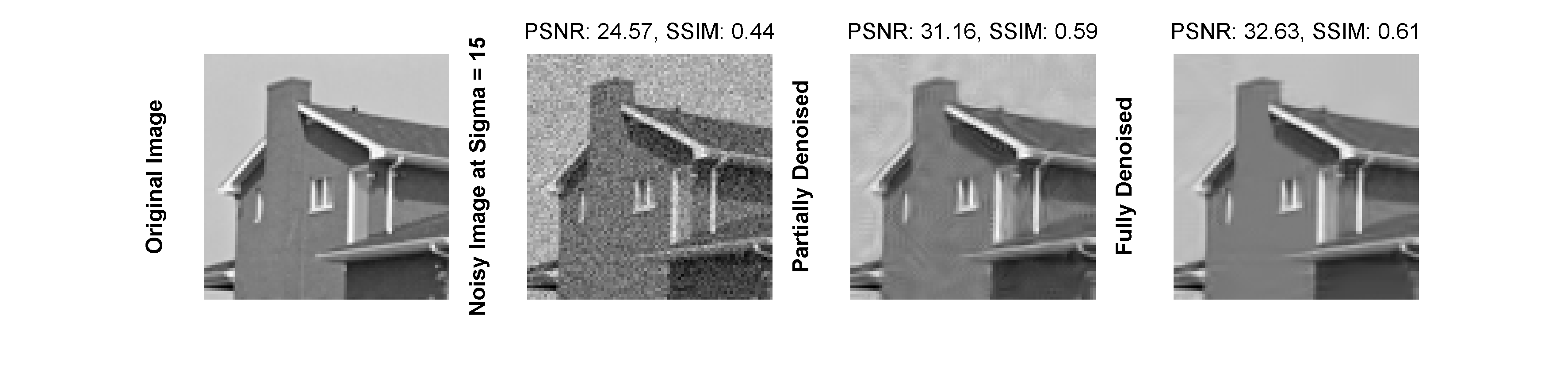}
	\includegraphics[width=1\linewidth]{./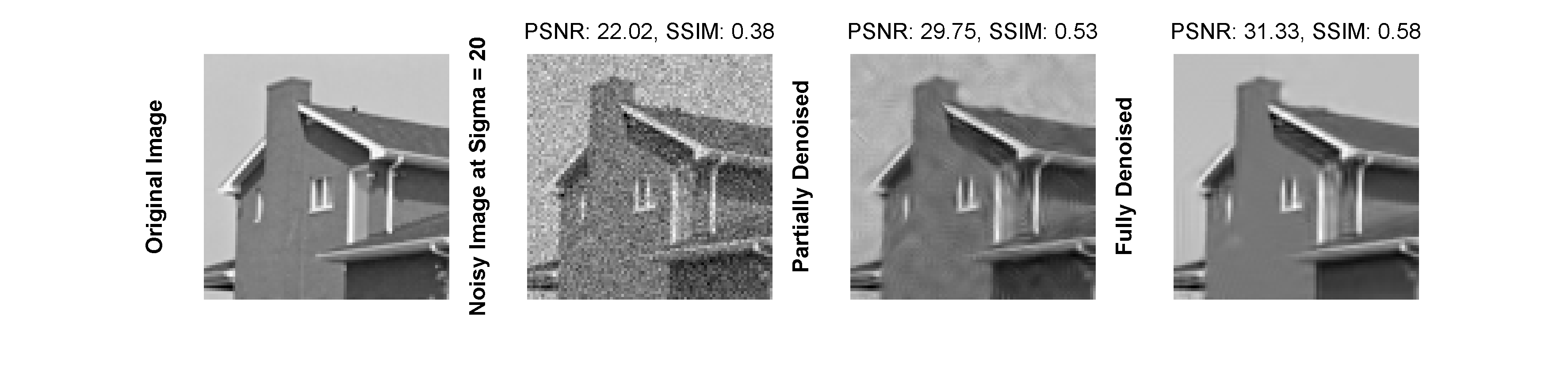}
	\includegraphics[width=1\linewidth]{./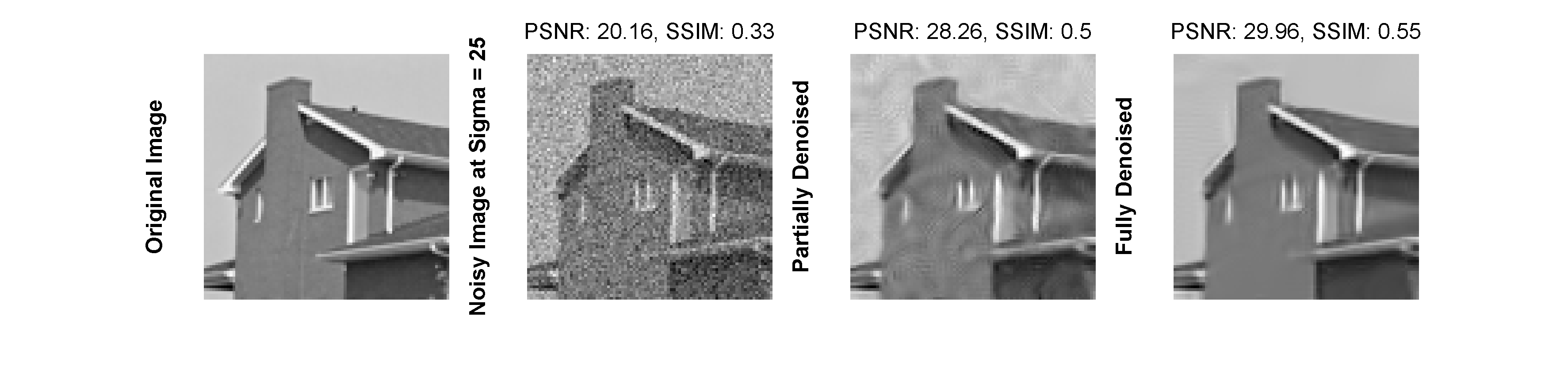}
\end{figure*}
\newpage
\begin{figure*}[t]
	\centering
	\includegraphics[width=1\linewidth]{./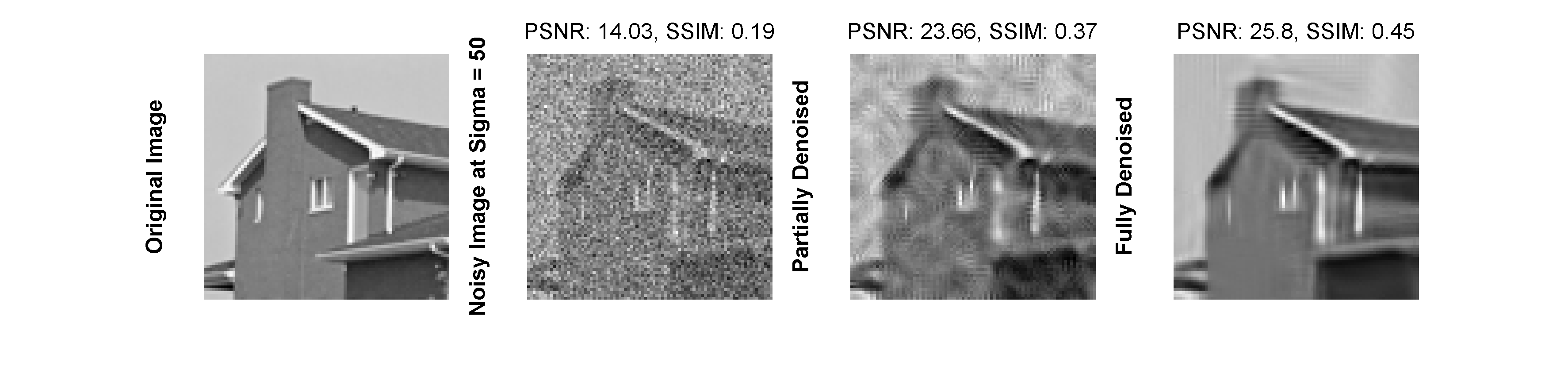}
	\includegraphics[width=1\linewidth]{./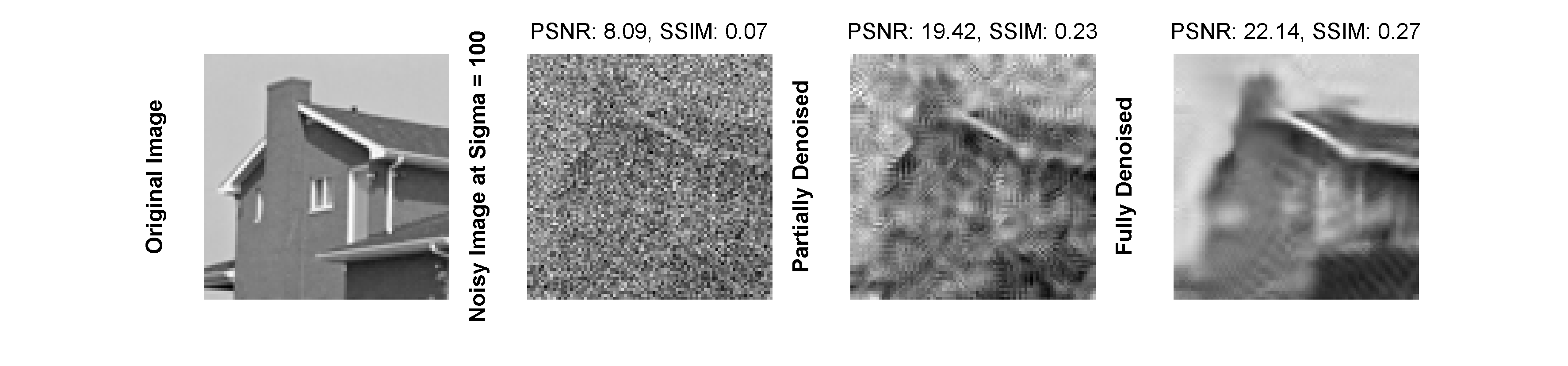}
	\includegraphics[width=1\linewidth]{./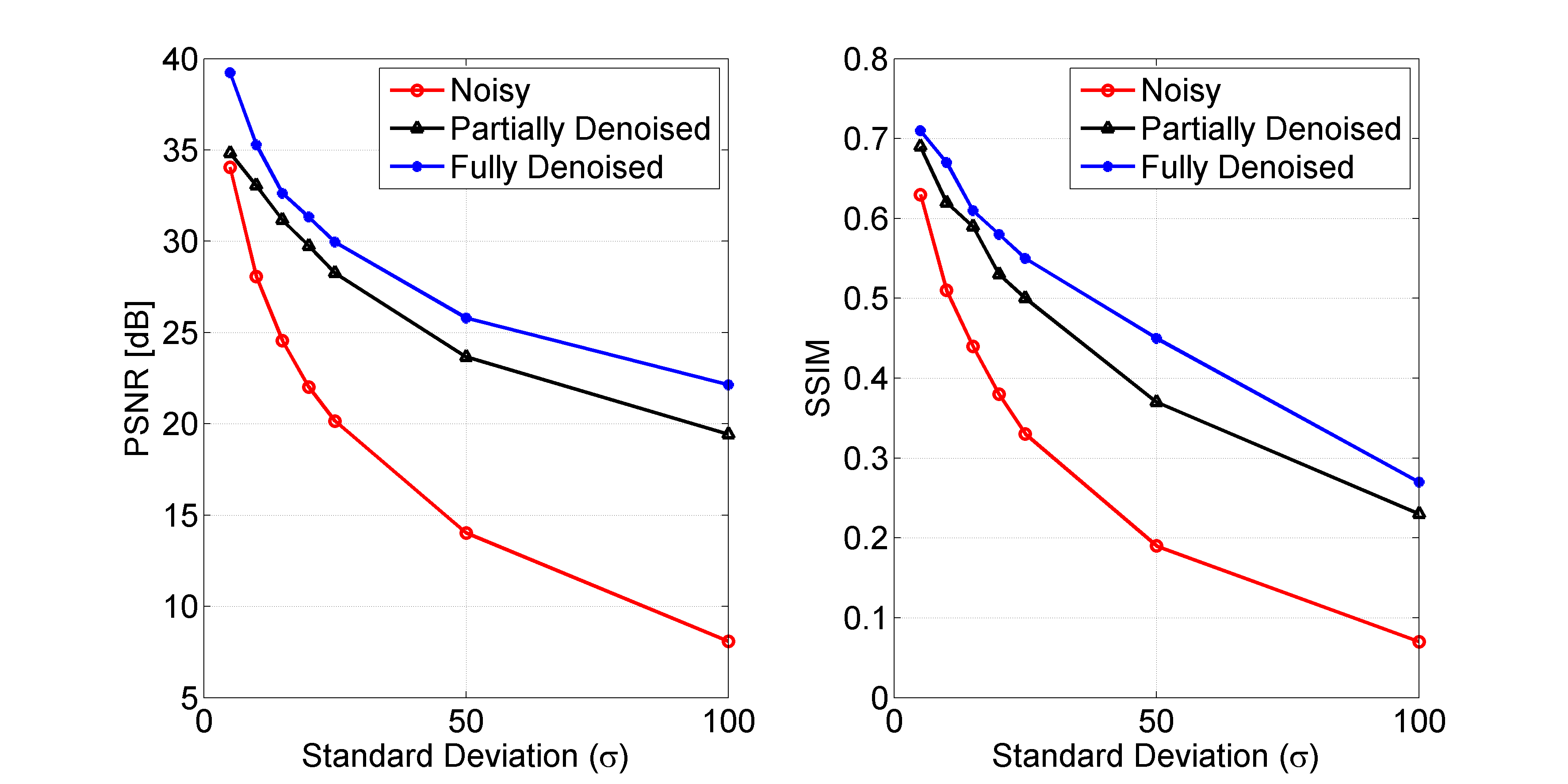}
	\caption{Denoising $256 \times256$ grayscale \textit{House} standard test data images over noise $\sigma =  [5,10,15,20,25,50,100]$ when received at a node $\mu_\alpha$. Each row represent an original image, a noisy image, a partially denoised, and a fully denoised image, respectively, corrupted by a specific level of additive white Gaussian noise (AWGN). The graphical results in the end show PSNR [dB] and SSIM results in the form of graphs.}
\end{figure*}

\newpage
\begin{figure*}[t]
	\centering
	\includegraphics[width=1\linewidth]{./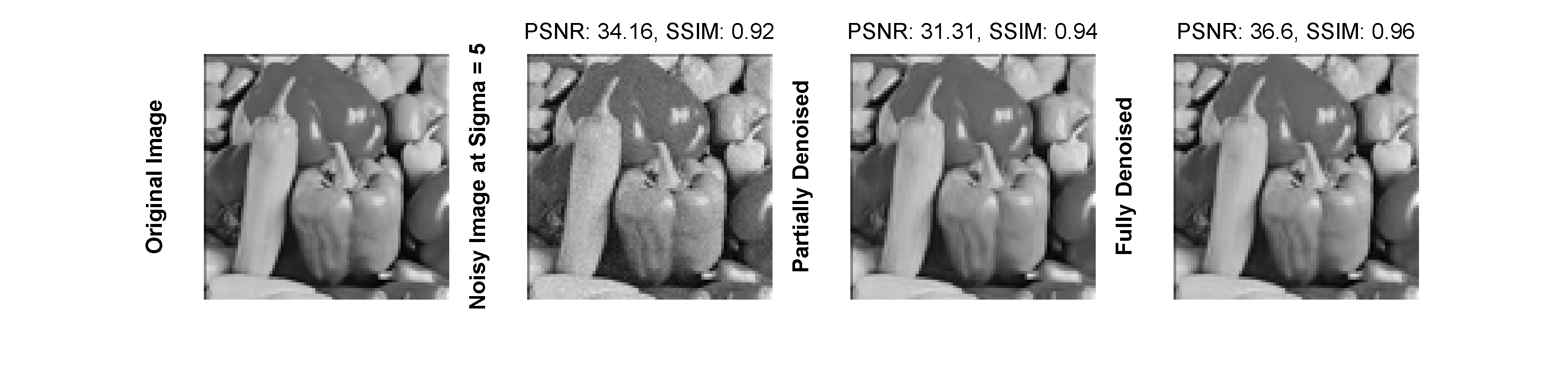}
	\includegraphics[width=1\linewidth]{./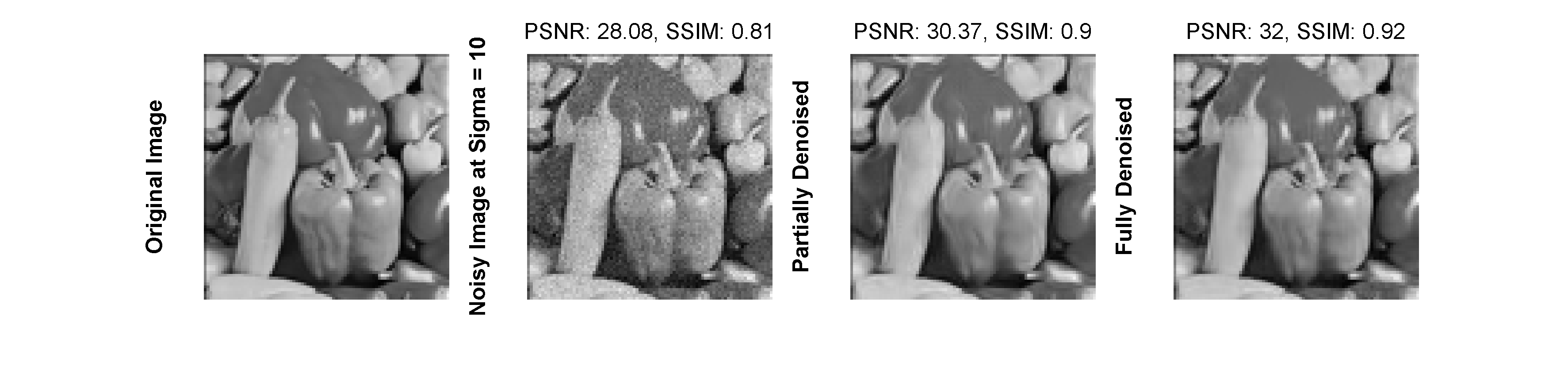}
	\includegraphics[width=1\linewidth]{./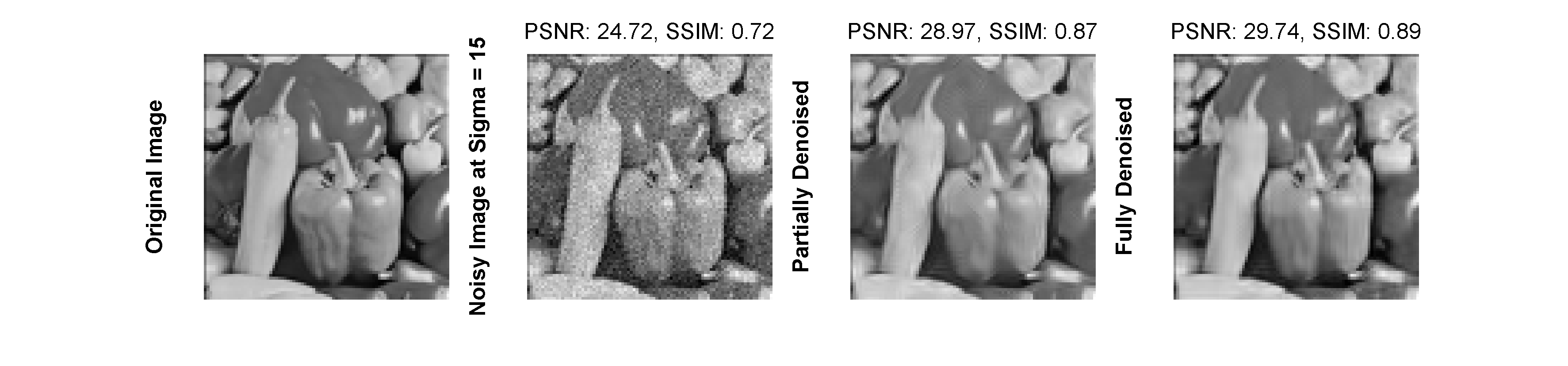}
	\includegraphics[width=1\linewidth]{./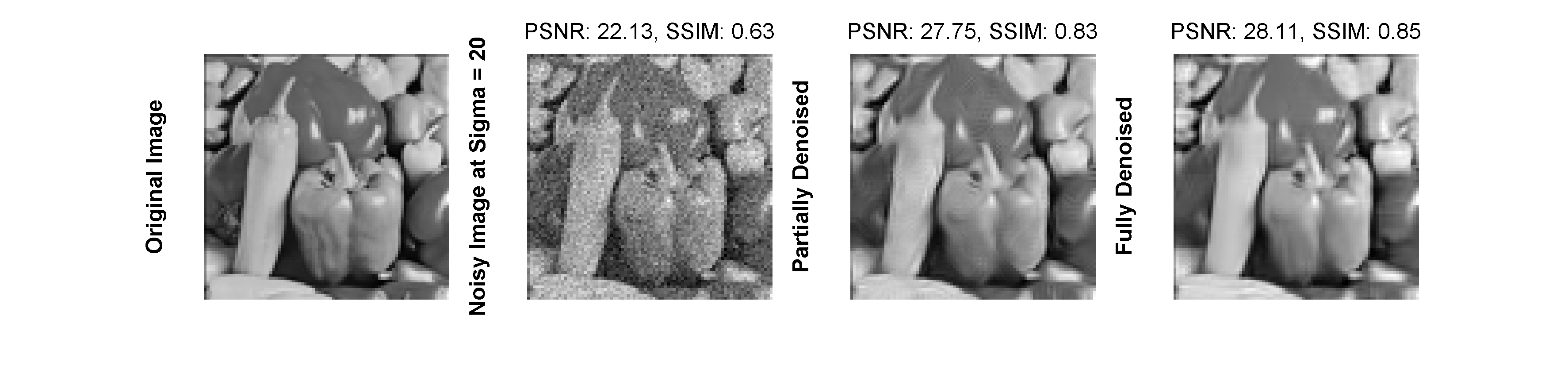}
	\includegraphics[width=1\linewidth]{./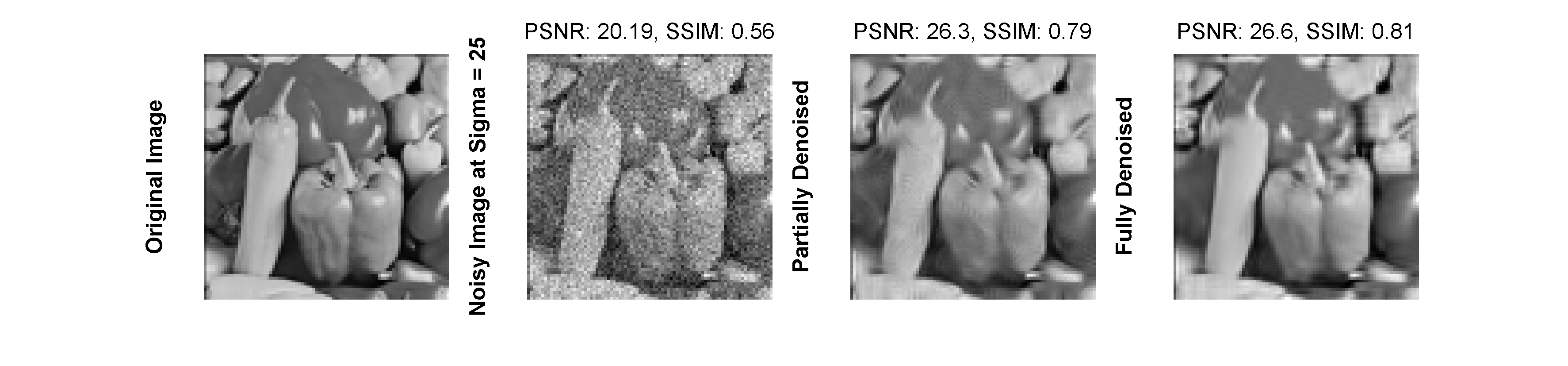}
\end{figure*}
\newpage
\begin{figure*}[t]
	\centering
	\includegraphics[width=1\linewidth]{./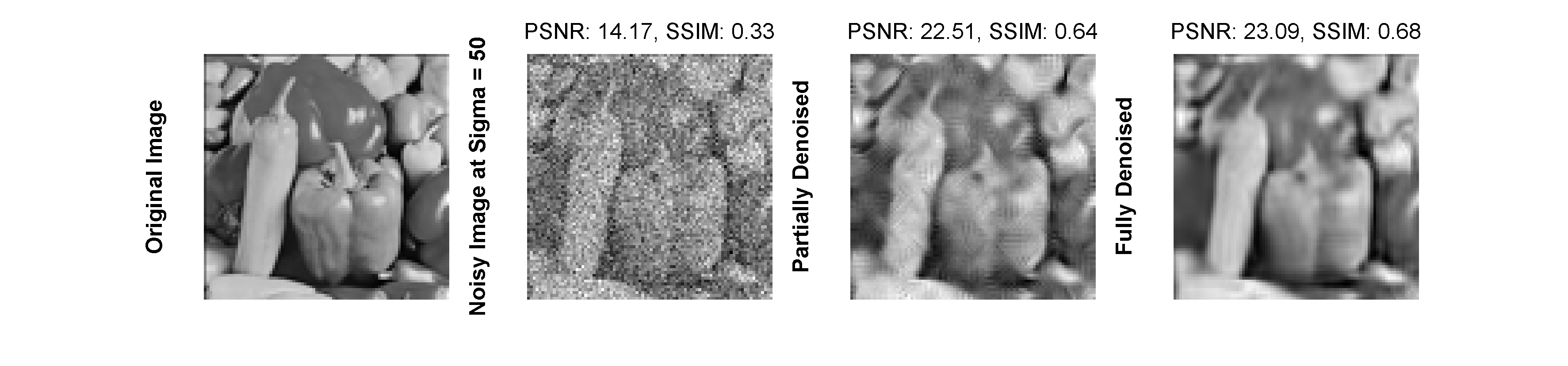}
	\includegraphics[width=1\linewidth]{./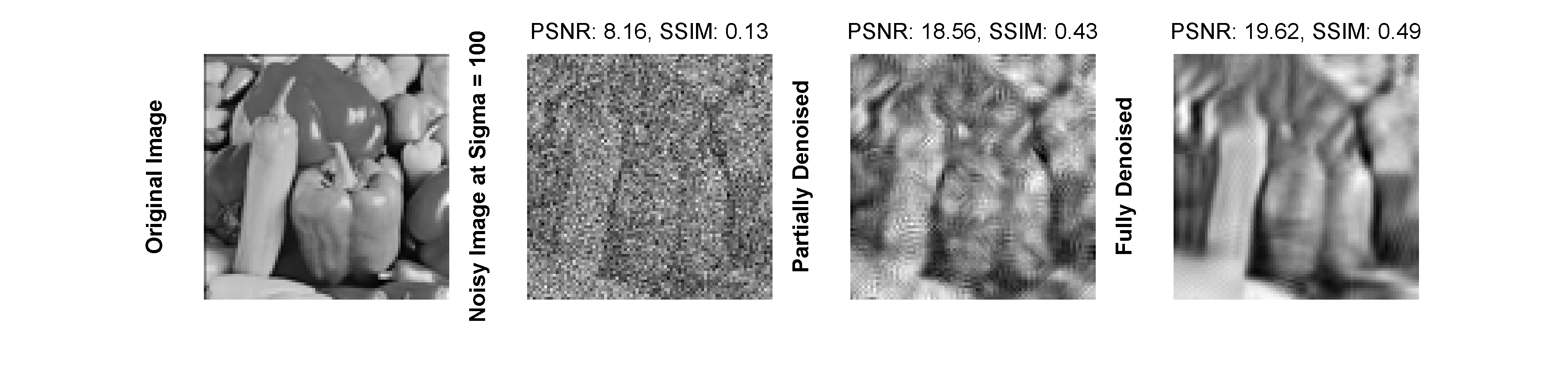}
	\includegraphics[width=1\linewidth]{./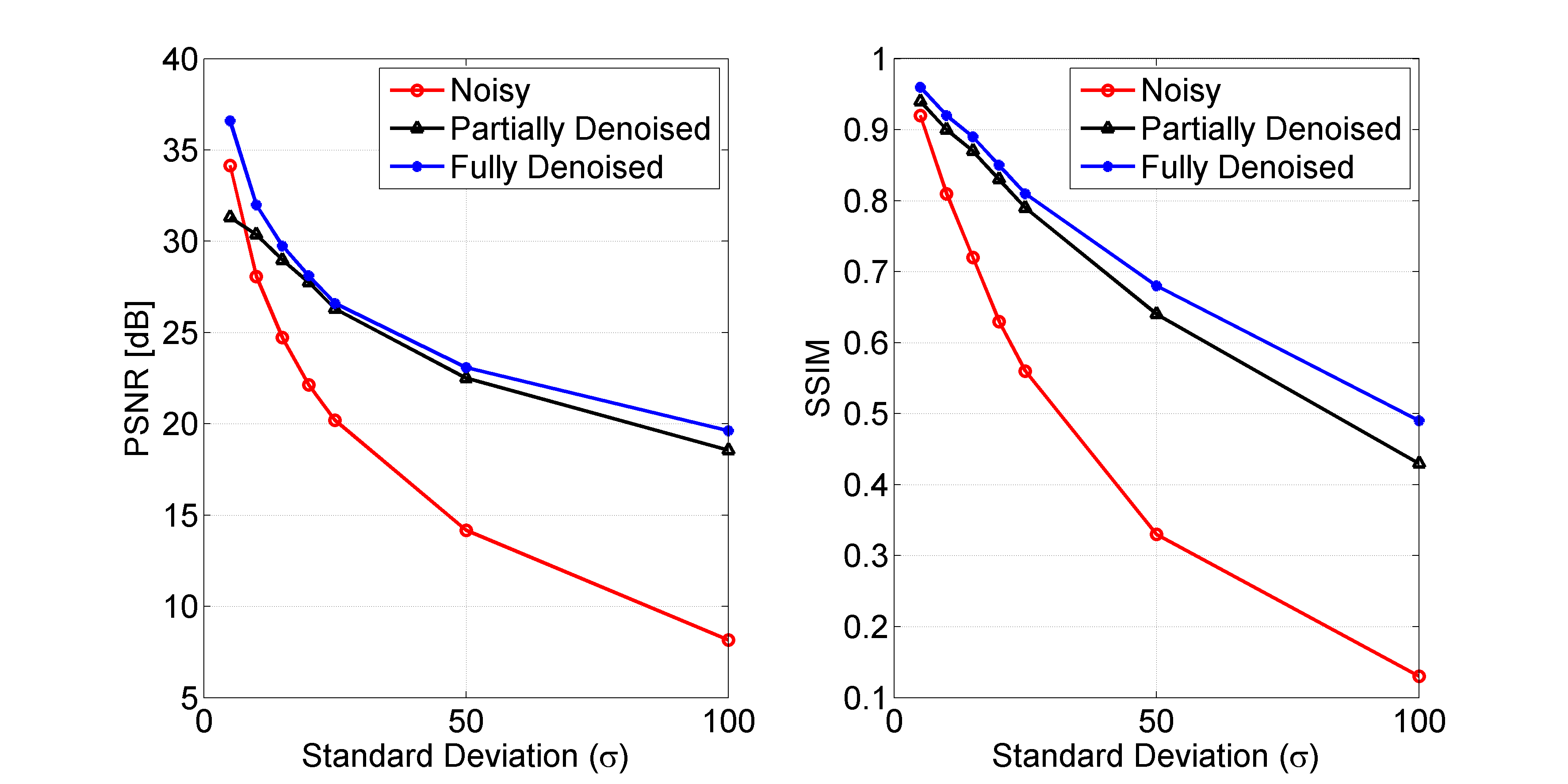}
	\caption{Denoising $256 \times256$ grayscale \textit{Peppers} standard test data images over noise $\sigma =  [5,10,15,20,25,50,100]$ when received at a node $\mu_\alpha$. Each row represent an original image, a noisy image, a partially denoised, and a fully denoised image, respectively, corrupted by a specific level of additive white Gaussian noise (AWGN). The graphical results in the end show PSNR [dB] and SSIM results in the form of graphs.}
\end{figure*}

\newpage
\begin{figure*}[t]
	\centering
	\includegraphics[width=1\linewidth]{./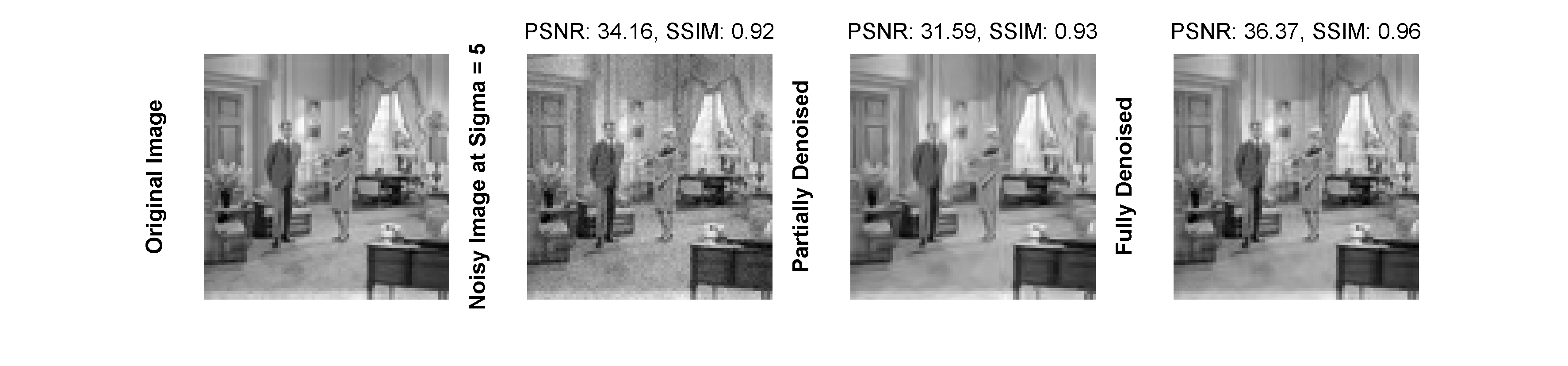}
	\includegraphics[width=1\linewidth]{./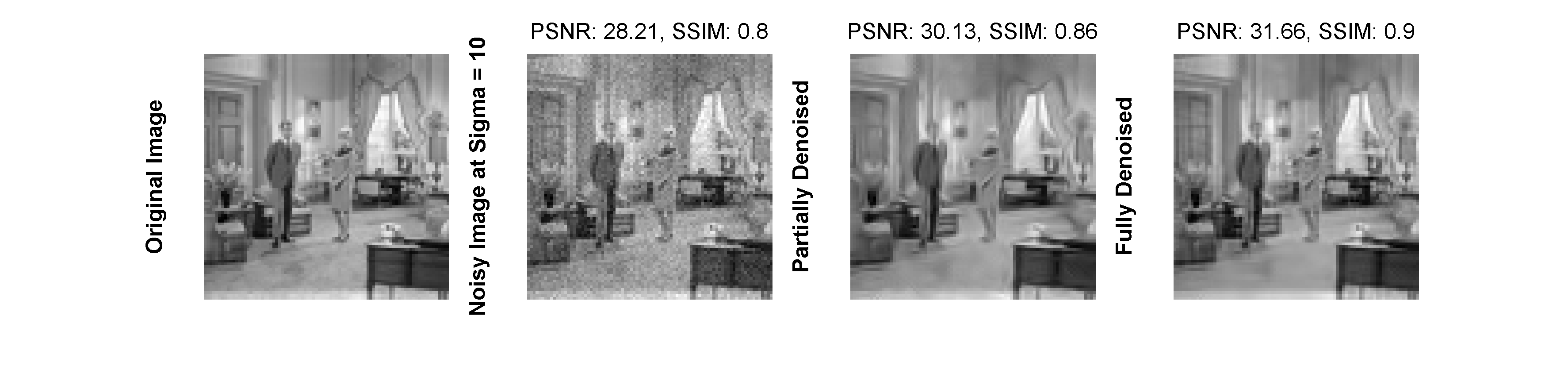}
	\includegraphics[width=1\linewidth]{./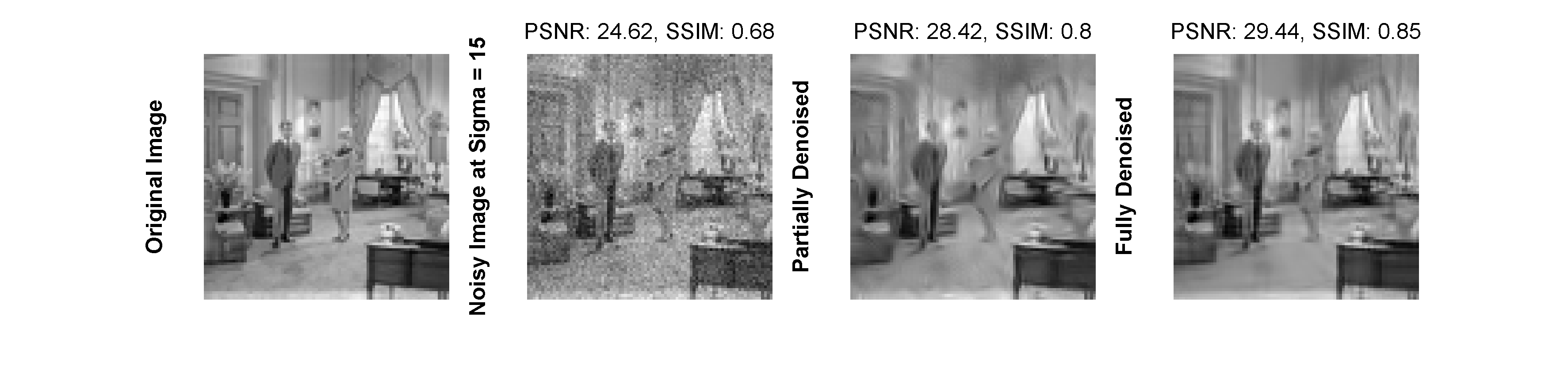}
	\includegraphics[width=1\linewidth]{./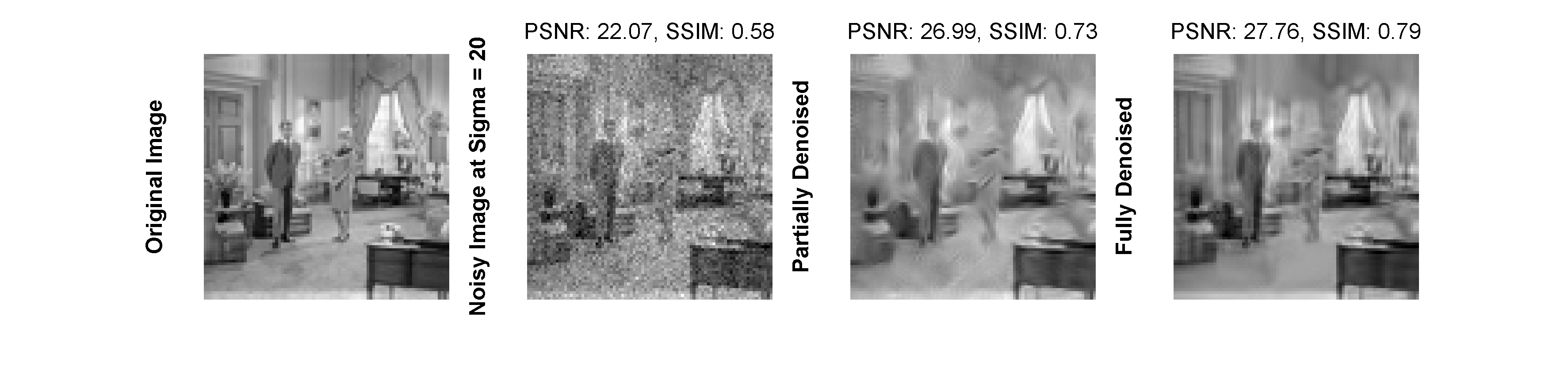}
	\includegraphics[width=1\linewidth]{./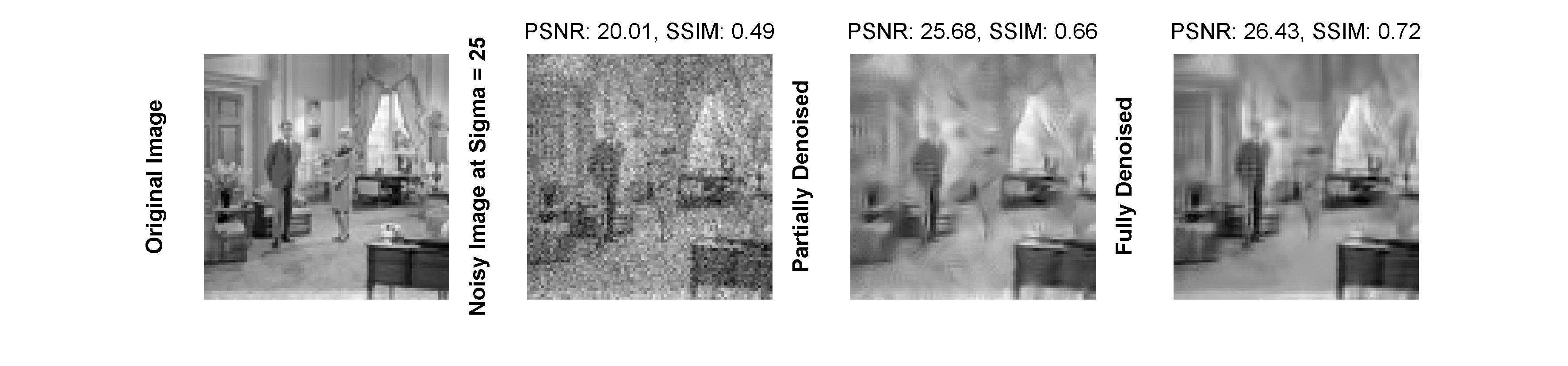}
\end{figure*}
\newpage
\begin{figure*}[t]
	\centering
	\includegraphics[width=1\linewidth]{./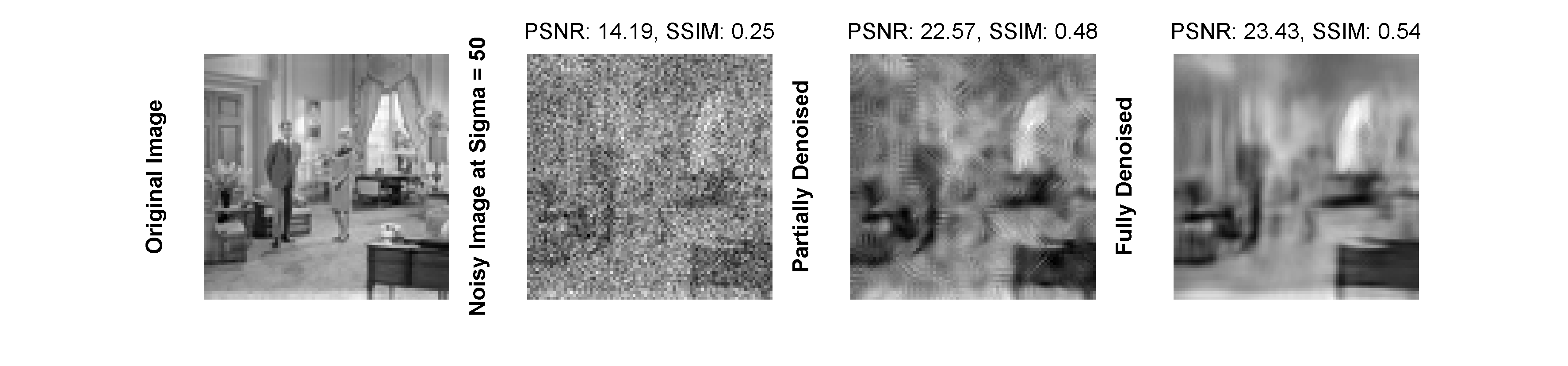}
	\includegraphics[width=1\linewidth]{./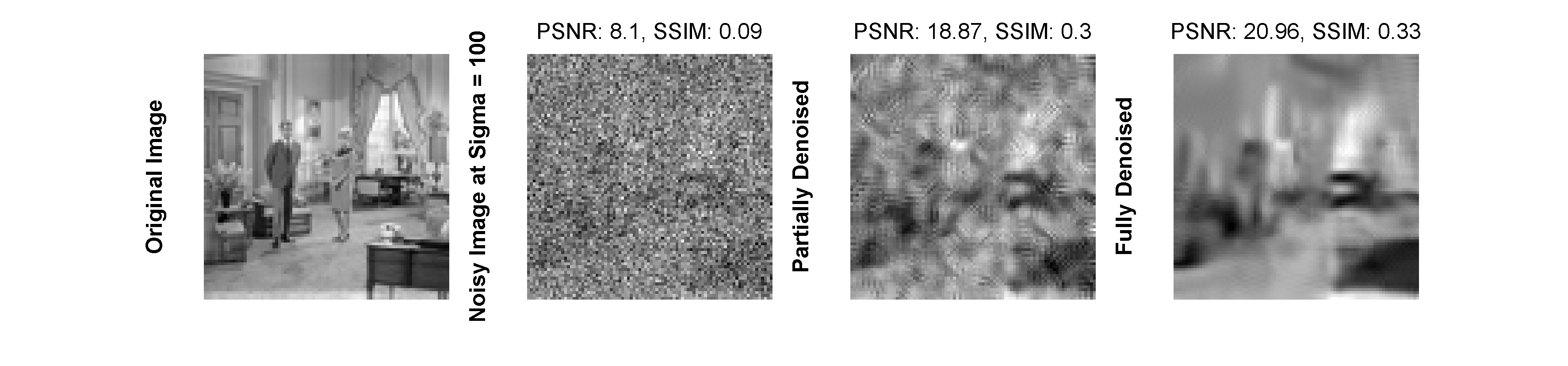}
	\includegraphics[width=1\linewidth]{./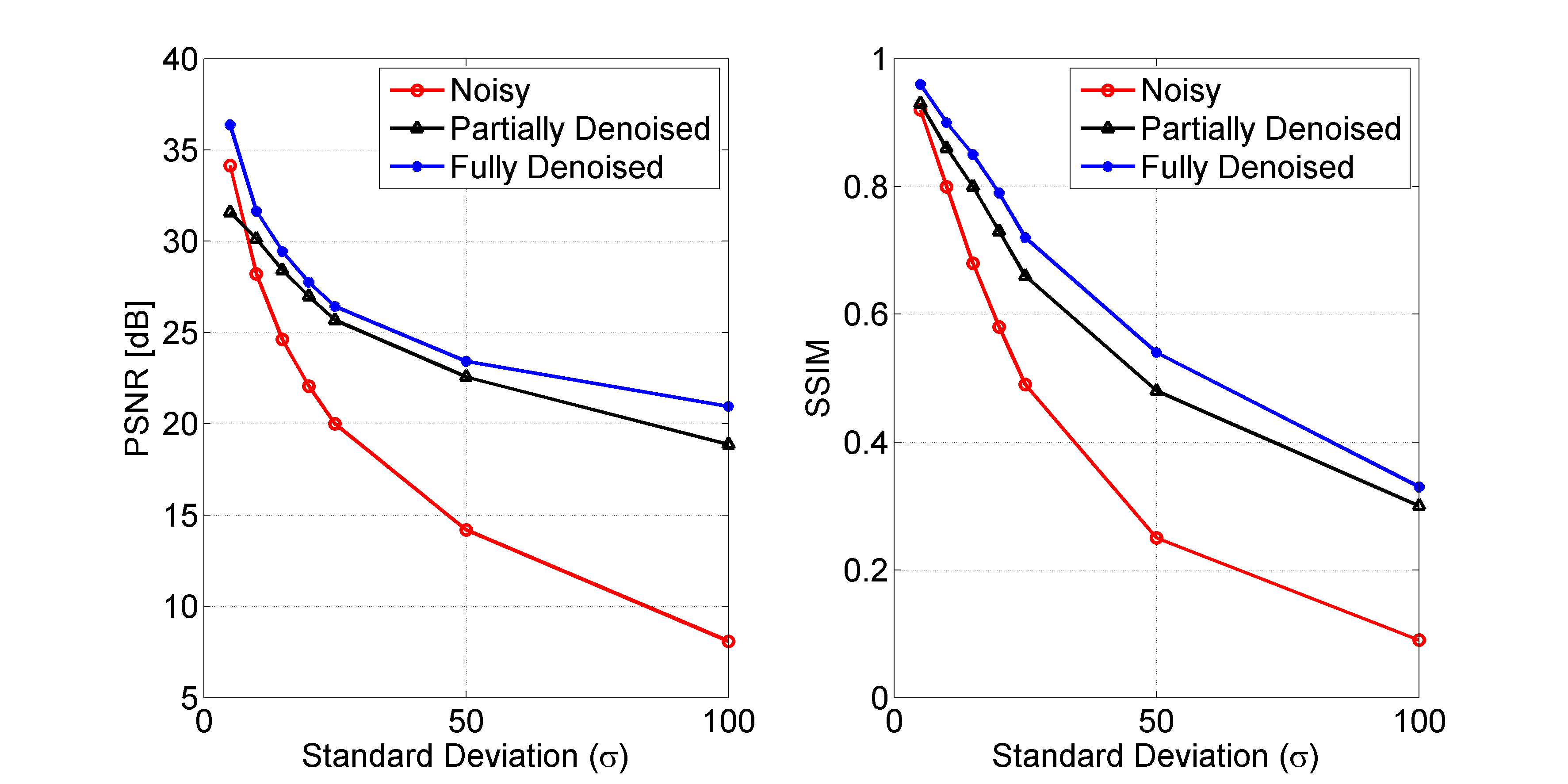}
	\caption{Denoising $256 \times256$ grayscale \textit{Living Room} standard test data images over noise $\sigma =  [5,10,15,20,25,50,100]$ when received at a node $\mu_\alpha$. Each row represent an original image, a noisy image, a partially denoised, and a fully denoised image, respectively, corrupted by a specific level of additive white Gaussian noise (AWGN). The graphical results in the end show PSNR [dB] and SSIM results in the form of graphs.}
\end{figure*}

\newpage
\begin{figure*}[t]
	\centering
	\includegraphics[width=1\linewidth]{./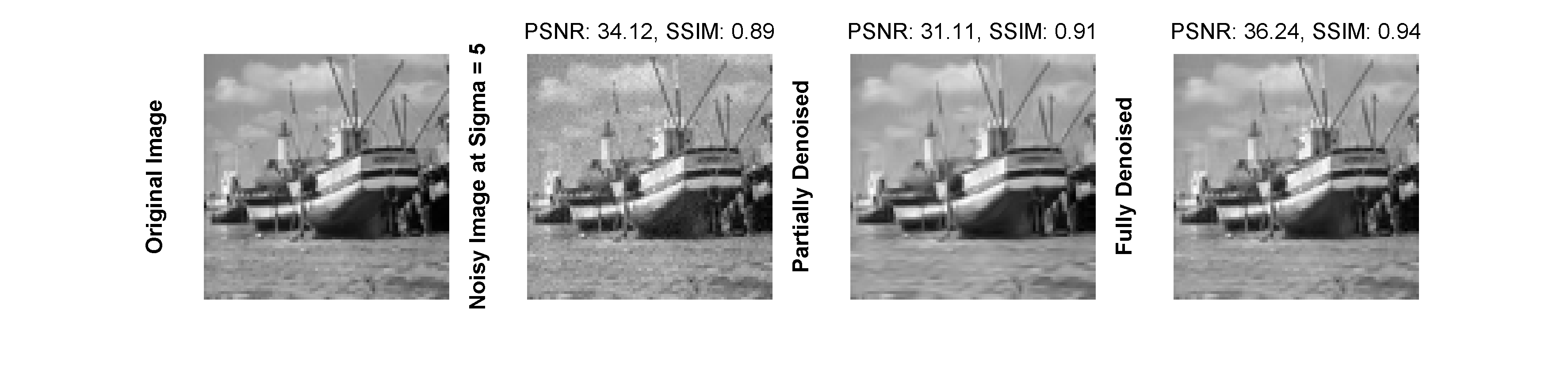}
	\includegraphics[width=1\linewidth]{./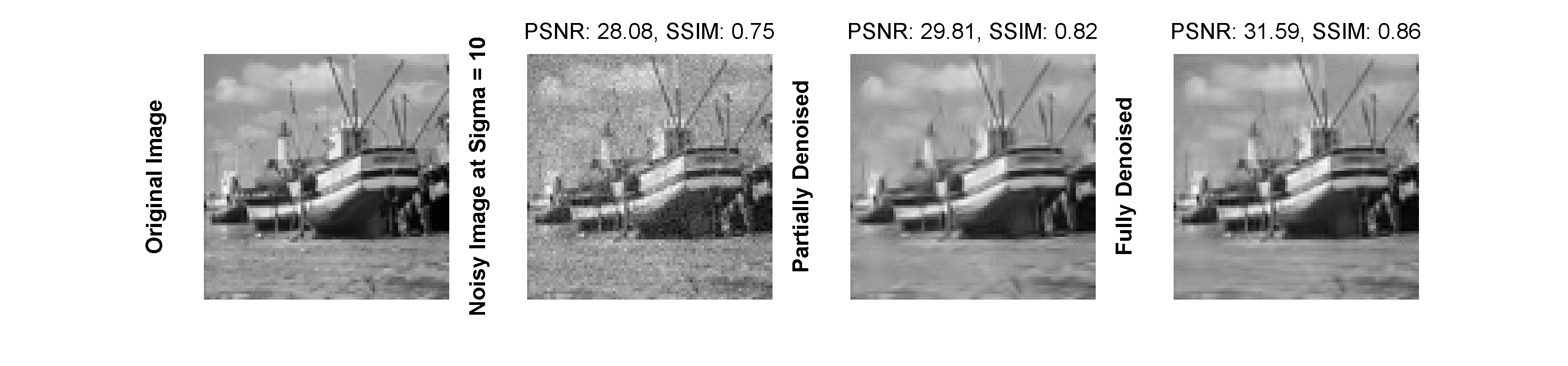}
	\includegraphics[width=1\linewidth]{./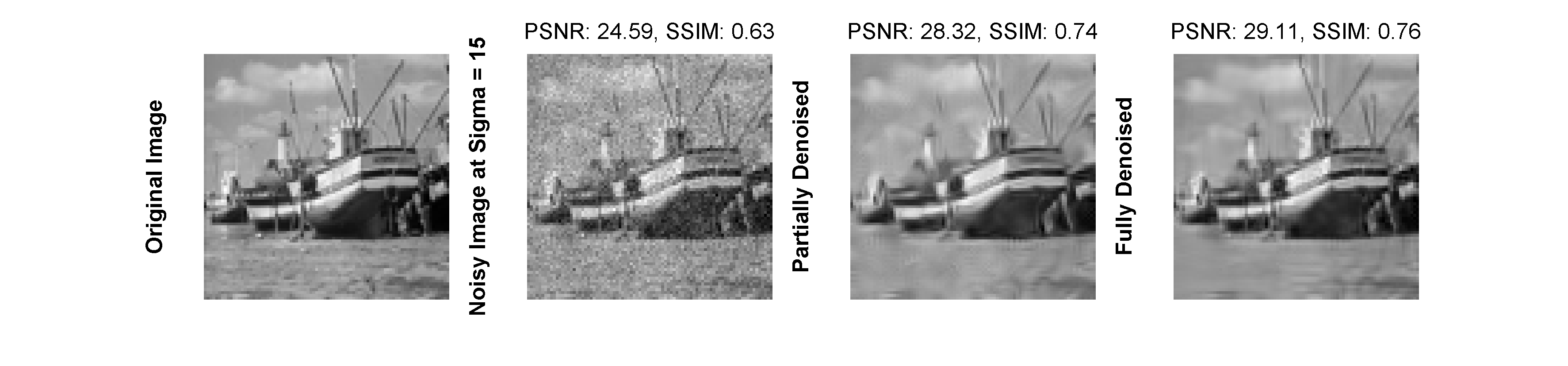}
	\includegraphics[width=1\linewidth]{./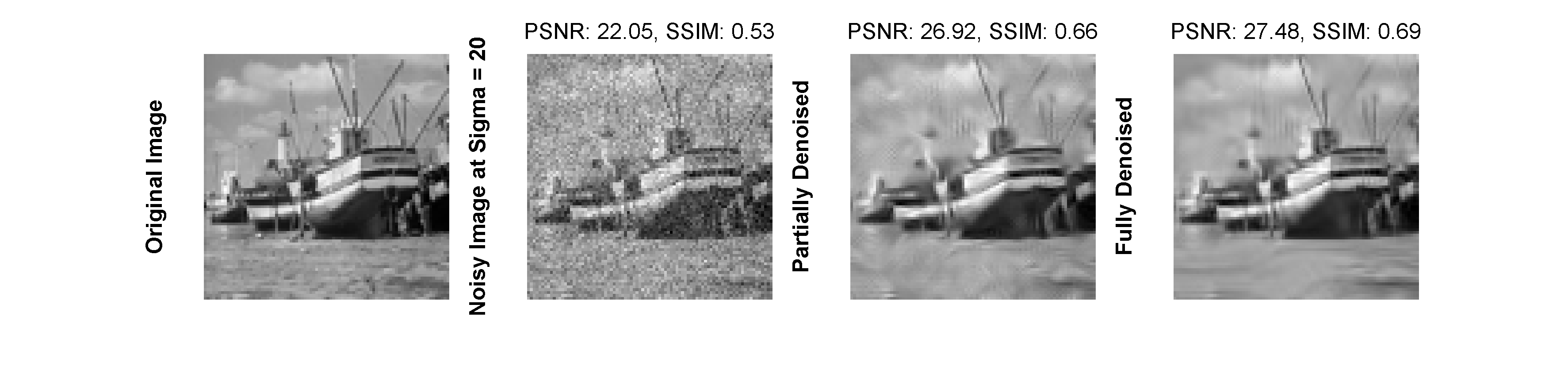}
	\includegraphics[width=1\linewidth]{./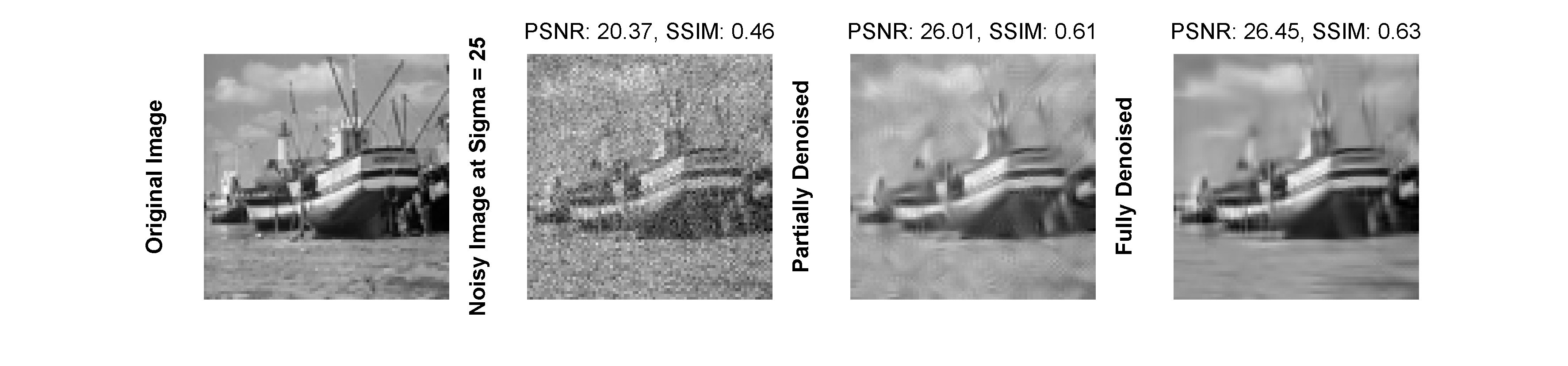}
\end{figure*}
\newpage
\begin{figure*}[t]
	\centering
	\includegraphics[width=1\linewidth]{./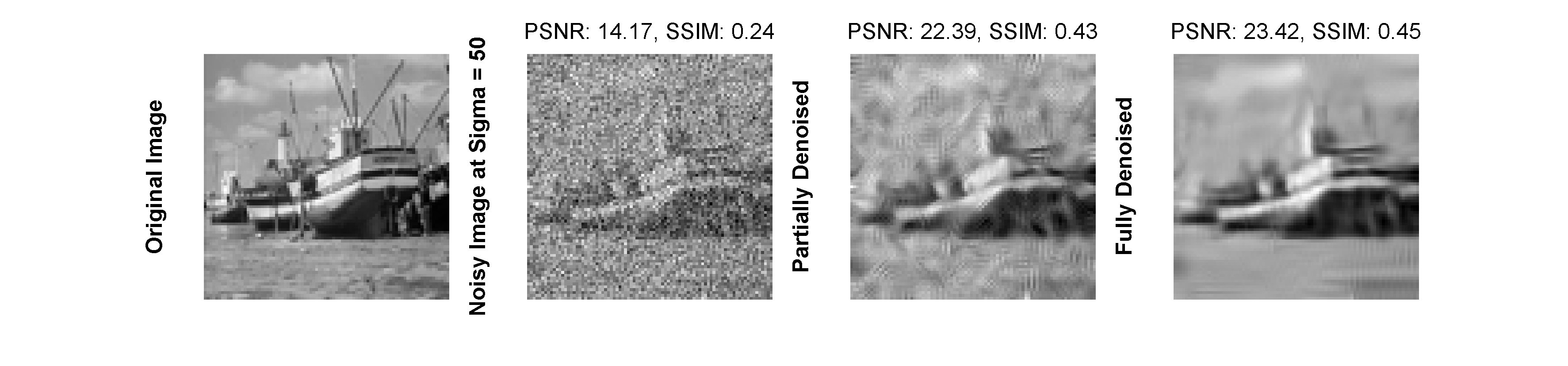}
	\includegraphics[width=1\linewidth]{./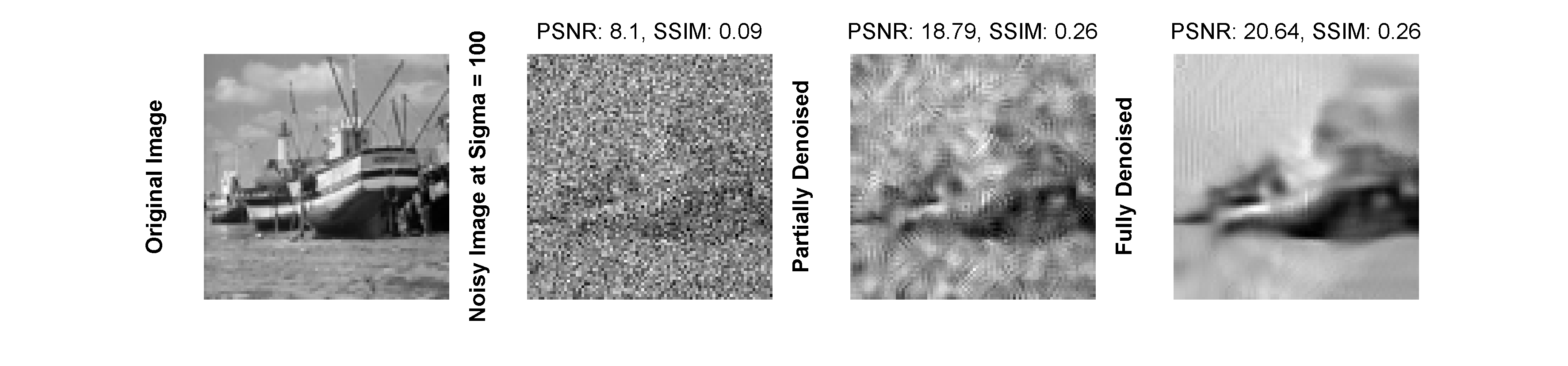}
	\includegraphics[width=1\linewidth]{./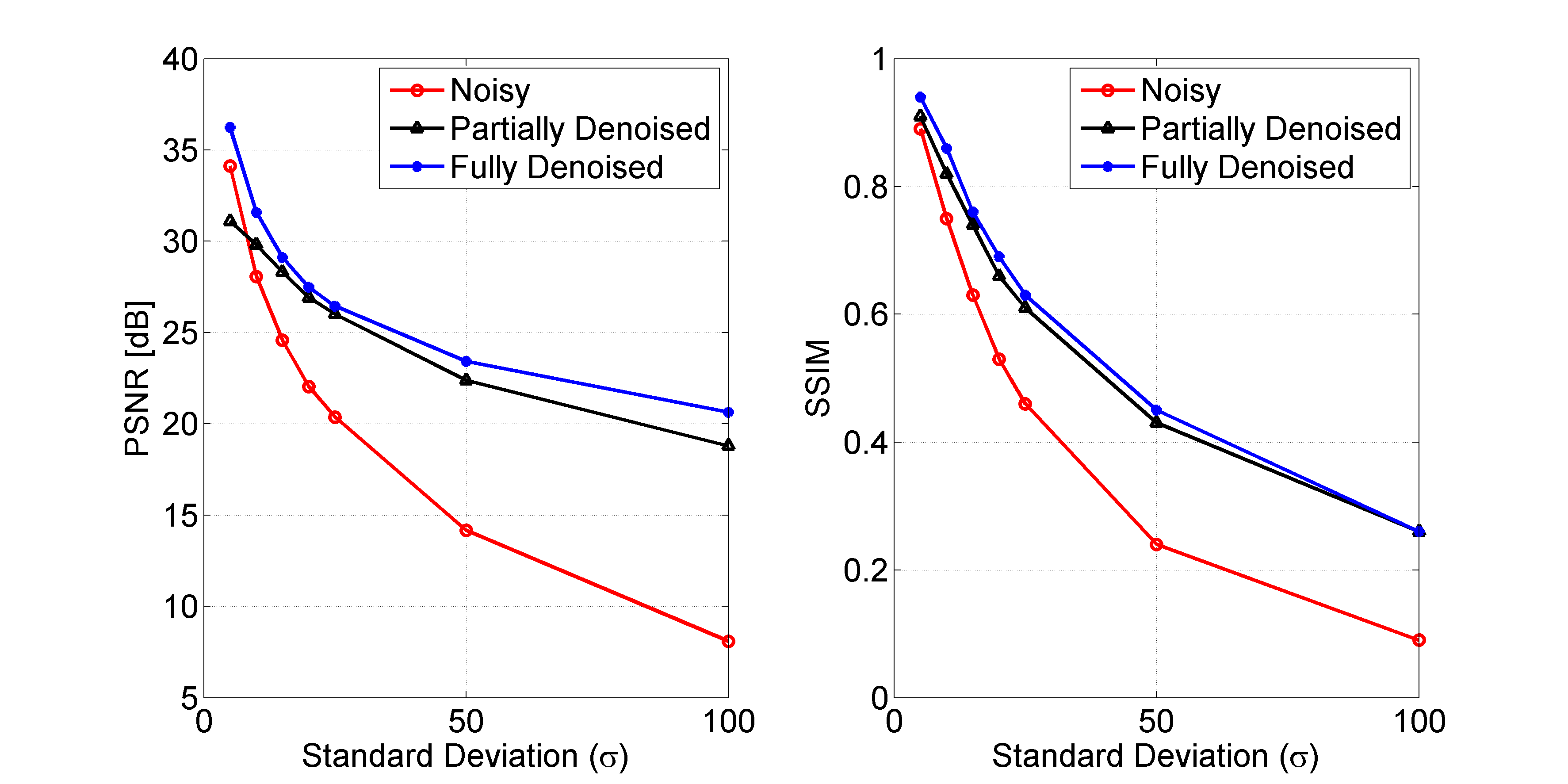}
	\caption{Denoising $256 \times256$ grayscale \textit{Boat} standard test data images over noise $\sigma =  [5,10,15,20,25,50,100]$ when received at a node $\mu_\alpha$. Each row represent an original image, a noisy image, a partially denoised, and a fully denoised image, respectively, corrupted by a specific level of additive white Gaussian noise (AWGN). The graphical results in the end show PSNR [dB] and SSIM results in the form of graphs.}
\end{figure*}

\newpage
\begin{figure*}[t]
	\centering
	\includegraphics[width=1\linewidth]{./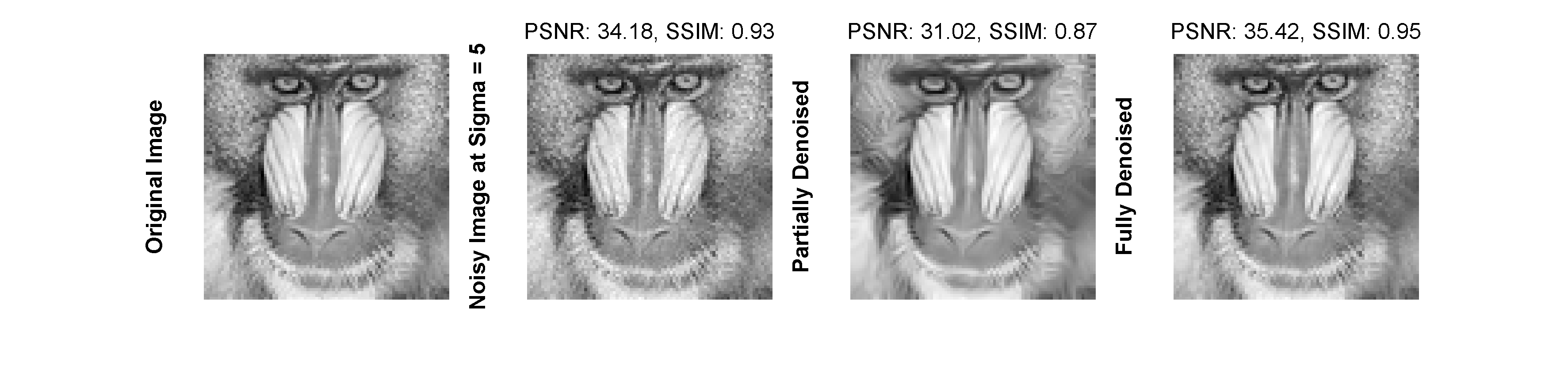}
	\includegraphics[width=1\linewidth]{./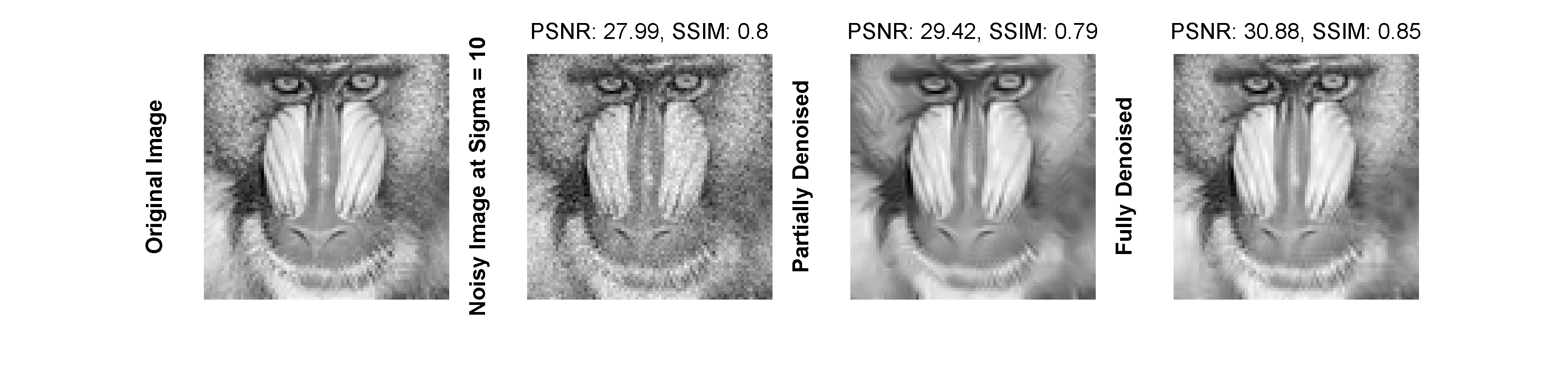}
	\includegraphics[width=1\linewidth]{./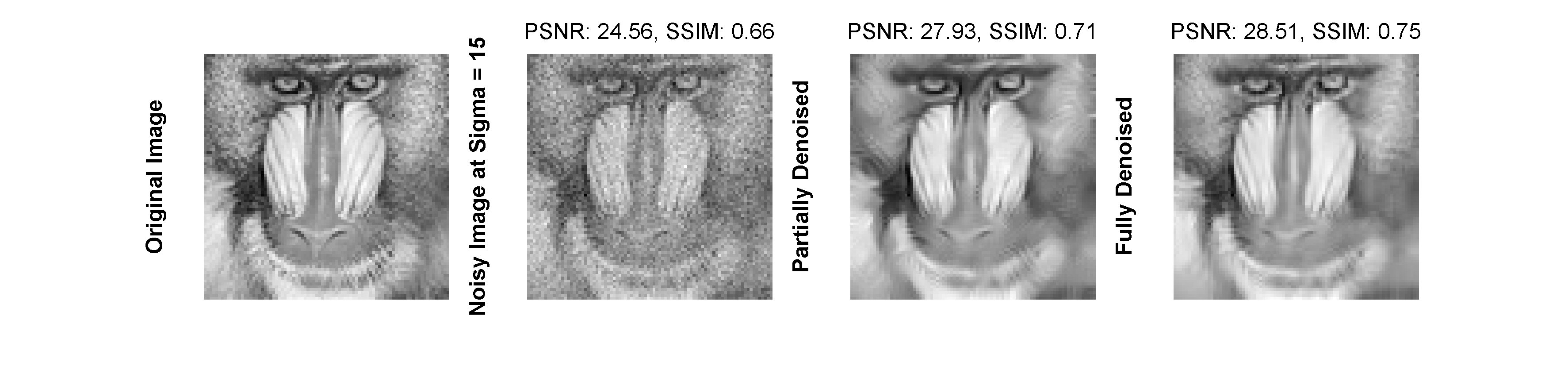}
	\includegraphics[width=1\linewidth]{./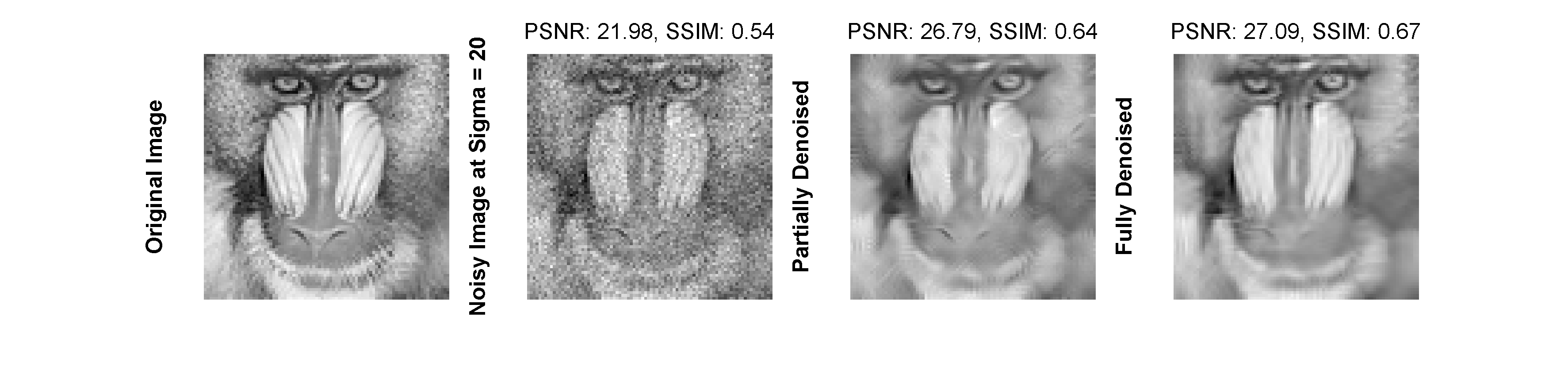}
	\includegraphics[width=1\linewidth]{./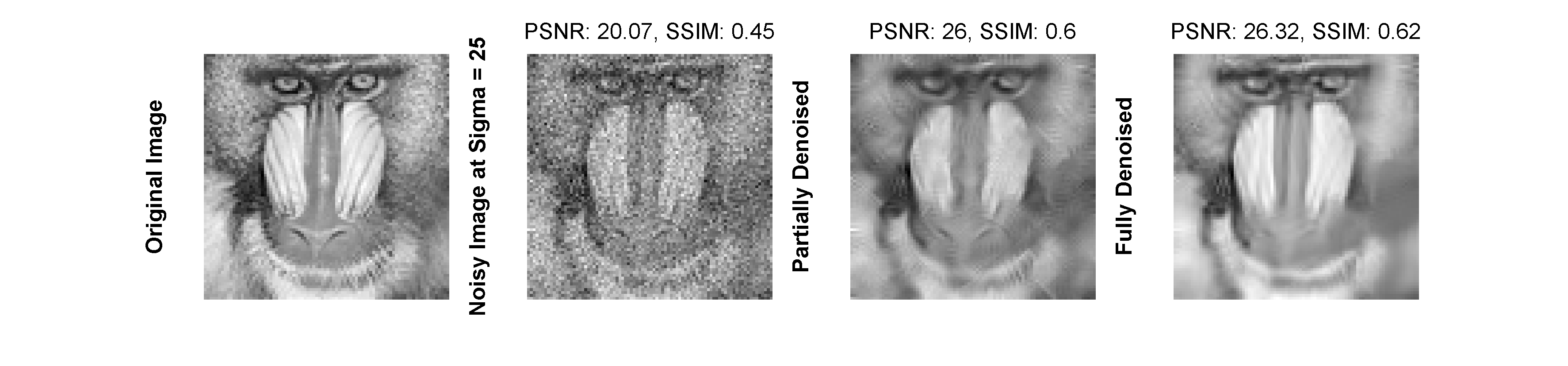}
\end{figure*}
\newpage
\begin{figure*}[t]
	\centering
	\includegraphics[width=1\linewidth]{./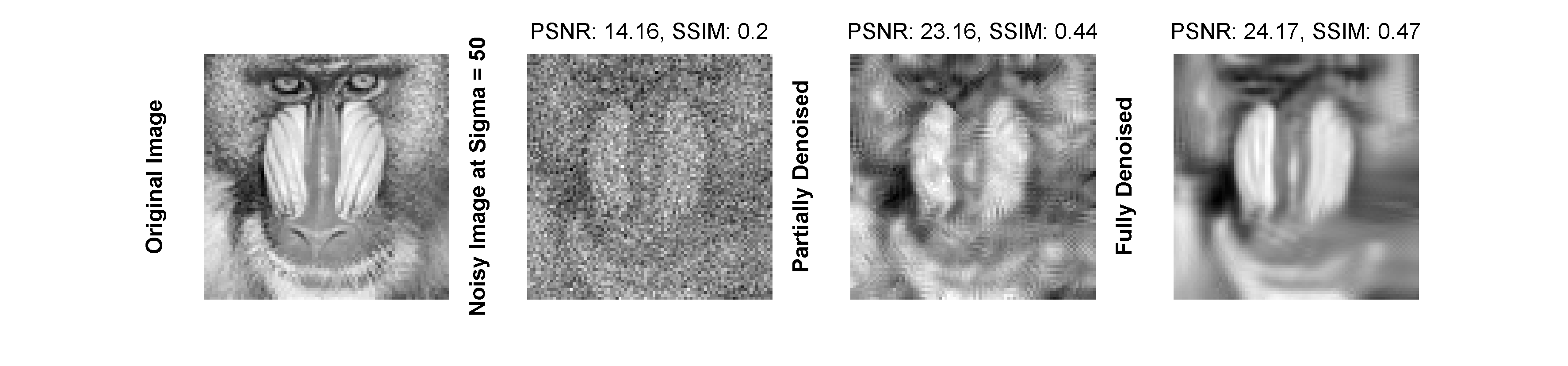}
	\includegraphics[width=1\linewidth]{./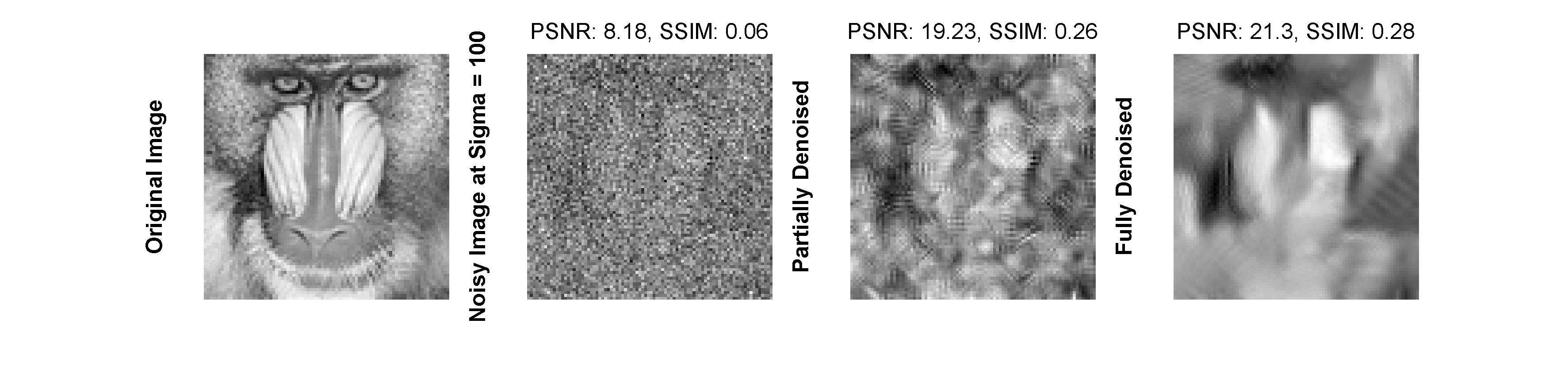}
	\includegraphics[width=1\linewidth]{./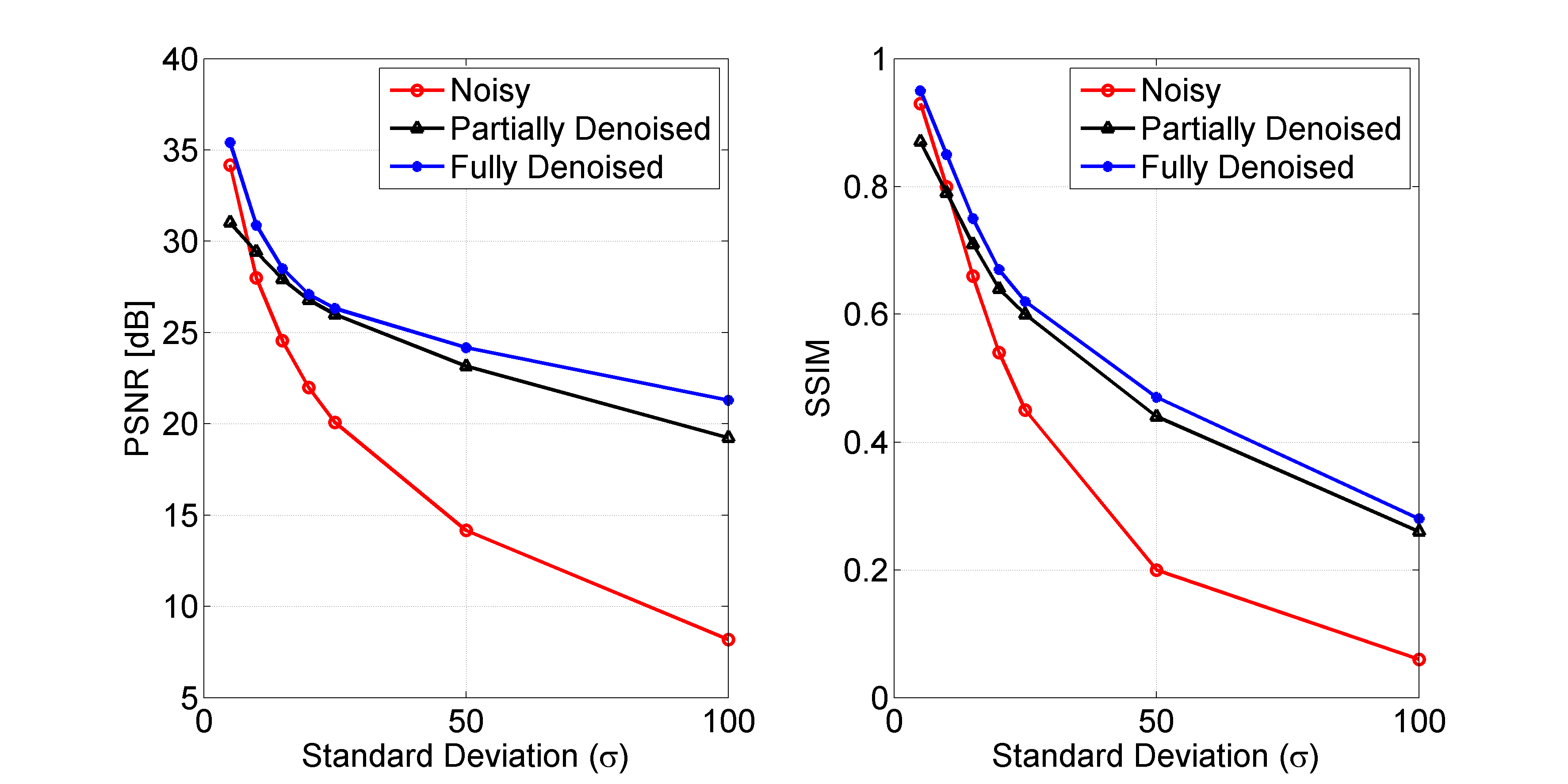}
	\caption{Denoising $256 \times256$ grayscale \textit{Mandrill} standard test data images over noise $\sigma =  [5,10,15,20,25,50,100]$ when received at a node $\mu_\alpha$. Each row represent an original image, a noisy image, a partially denoised, and a fully denoised image, respectively, corrupted by a specific level of additive white Gaussian noise (AWGN). The graphical results in the end show PSNR [dB] and SSIM results in the form of graphs.}
\end{figure*}

\clearpage
\bibliographystyle{ieeetr}
\bibliography{Behzad}

\end{document}